\newcommand{\prt}{\partial}
\def\Re{{\mathbb R}}
\def\Pr{{\mathbb P}}
\def\H{{\mathcal H}}
\def\B{{\mathcal B}}
\def\L{{\hat L}}
\def\u{{\bar u}}
\def\w{{\bar w}}
\newtheorem{theorem}{Theorem}
\newtheorem{prop}[theorem]{Proposition}
\newtheorem{lemma}{Lemma}
\theoremstyle{definition}
\begin{document}
\bibliographystyle{plain}

\title{{\Large\bf  Universal Structure and Universal PDE for Unitary Ensembles}}
\author{Igor Rumanov\footnote{e-mail: igorrumanov@math.ucdavis.edu} \\
{\small Department of Mathematics, UC Davis, 1 Shields Avenue, CA 95616}}

\maketitle

\bigskip
\begin{abstract}
Random matrix ensembles with unitary invariance of measure (UE) are described in a unified way, using a combination of Tracy-Widom (TW) and Adler-Shiota-Van Moerbeke (ASvM) approaches to derivation of partial differential equations (PDE) for spectral gap probabilities. First, general 3-term recurrence relations for UE restricted to subsets of real line, or, in other words, for functions in the resolvent kernel, are obtained. Using them, simple universal relations between all TW dependent variables and one-dimensional Toda lattice $\tau$-functions are found. A universal system of PDE for UE is derived from previous relations, which leads also to a {\it single independent PDE} for spectral gap probability of various UE. Thus, orthogonal function bases and Toda lattice are seen at the core of correspondence of different approaches. Moreover, Toda-AKNS system provides a common structure of PDE for unitary ensembles. Interestingly, this structure can be seen in two very different forms: one arises from orthogonal functions-Toda lattice considerations, while the other comes from Schlesinger equations for isomonodromic deformations and their relation with TW equations. The simple example of Gaussian matrices most neatly exposes this structure.
\end{abstract}

\newpage 

\section*{\normalsize\bf I. INTRODUCTION}

Unitary ensembles of (single) random matrices (UE) are the most well known and thoroughly studied among random matrix ensembles (RME). Integrable properties possessed by their correlation functions and spectral gap probabilities, their ties with integrable hierarchies (see e.g. Ref.~\cite{DJKM}) of partial differential equations (PDE) were extensively studied in the last two decades, see e.g. Ref.~\cite{Me04}. Still the approaches used to derive equations satisfied by gap probabilities, e.g. Refs.~\cite{TW1, ASvM, BorDei}, seem to be far from each other, moreover, the dissimilarities grow when considering more complicated RME like coupled or Pfaffian ensembles. Due to the aforementioned integrability, there are some PDE in every approach, which are universal, i.e. independent of the particular UE being studied. They may be equations satisfied by resolvent kernels of Fredholm integral operators and functions they are composed of. This was established as a general approach to study RME spectra in Ref.~\cite{TW1} (we call it TW further on). They also may be well-known nonlinear integrable PDE --- the first members of KP or Toda lattice hierarchies, or their subhierarchies like KdV and others. This is the case in the approach started in Ref.~\cite{ASvM} (called ASvM from now on), see also Refs.~\cite{AvM2, AvM7, AvMdKP, PvM2007}. But even this universal part of all methods, supplied by the integrable structure itself, and thus identical for different ensembles, is described differently in various approaches. As for the non-universal part, depending on the particular properties of a given probability measure, it practically leads to the necessity of almost case by case study of various ensembles. However, one might hope that it should be possible after all to describe the universal properties in a universal way, incorporating all existing approaches and emphasizing the common underlying structure inherent in all cases. 
\par First steps in this direction were made in Refs.~\cite{Har} and \cite{IR1}, where the important TW and ASvM approaches were matched for several interesting cases: Airy, Bessel and Sine in Ref.~\cite{Har} and Gaussian (or Hermite) in Ref.~\cite{IR1}. In the second paper some very simple relations between the dependent variables of TW and ASvM approaches for the Gaussian UE (GUE) were revealed. Here we extend these relations to all UE with spectrum on real line and expose some additional relations of the same kind, which seem to be overlooked in all previous considerations. (Everything we do here can be repeated for circular UE or even for orthogonal-polynomial ensembles with eigenvalues on a contour in complex plane, considered e.g. in Ref.~\cite{BeEyHa-06}. Then, however, some specific formulas will differ from presented here, because of a different form of the basic three-term recurrence relations -- our starting point, see below.)  This allows us to derive some universal forms of PDE satisfied by gap probabilities of all UE irrespective of the specific spectral measure (or potential, in other words). Besides, we make contact with the isomonodromic approach~\cite{JMMS, JMU, JM2, IIKS, BorDei, BeEyHa-06} first applied to random matrices in Refs.~\cite{JMMS, FIK1, FIK2, HarTW}. In connection with TW work, it was most clearly exposed by Palmer in Ref.~\cite{Pal}, where the generalized Schlesinger isomonodromic deformation equations for UE were written with a common form of non-universal, i.e. potential dependent, terms. We demonstrate that combinations of these last equations in general give the form of ASvM equations for $\tau$-functions obtained in Ref.~\cite{IR1} for the Gaussian case. These combinations expose an amazing similarity in structure with our universal PDE claimed above. At the core of all our considerations is still, as it was in Ref.~\cite{IR1}, the structure of orthogonal function bases, three-term recurrence relations they satisfy, and one-dimensional Toda lattice hierarchy whose flows are in fact the continuous transformations among different such bases. We would like to stress here that while the last sentence is, strictly speaking, directly related only to {\it finite size} UE, the infinite ensembles can always be obtained as limits of finite ones. Moreover, the simple relations between TW and ASvM dependent variables, revealed here, allow one to {\it define} some quantities, originally meaningful only for finite matrices, like the $\tau$-ratios of matrix integrals of different matrix dimension, also for infinite ensembles.
\par The plan of the paper is as follows. Section II gives a brief summary of the key results in the paper. In section III we derive the three-term recurrence relations for resolvent kernel related orthogonal functions. The functions are identified with orthogonal functions for ensembles with measure restricted to the complement of the original measure support~\cite{BorSosh}. Using these recurrences, in section IV we derive the universal relations between the TW and ASvM variables extending and generalizing results of Ref.~\cite{IR1} for the Gaussian case to all UE. In section V, from the results of previous sections, we derive new universal PDE satisfied by all UE. We obtain three equations reflecting the fundamental Toda-AKNS structure but in addition also another equation arising from the defining connection between logarithmic derivative of $\tau$-function and resolvent kernel. This allows us to obtain a single universal PDE for gap probabilities of UE modified by Toda times in ASvM approach, separately for each endpoint of the spectrum. Besides, we obtain a new interesting recursion structure for finite size matrix integrals. In the next section we introduce a different framework for UE --- that of the isomonodromic deformations and Schlesinger equations, following the important work of Palmer~\cite{Pal}. We show then that our Toda-AKNS type system, derived by ASvM method in Ref.~\cite{IR1}, can always be obtained also by summing up the Schlesinger equations. A partial summation of them gives also a Toda-AKNS-type system analogous to the one from previous section, this time, however, describing {\it unmodified} UE, i.e. ensembles with fixed couplings in the potential. The additional equation of previous section is now a consequence of summed Schlesinger system. Another consequence is two universal constraints for the non-universal terms in TW/Schlesinger equations. In section VII we consider simplest examples from the perspective of universal PDE (\ref{eq:T}) of section V and explain the difficulty of its direct application to more complex cases. Section VIII is devoted to conclusions and open questions. 

\section*{\normalsize\bf II. BRIEF SUMMARY OF MAIN RESULTS}

For any UE on real line, given the corresponding orthogonal polynomial system and related Christoffel-Darboux kernel $K_n$, functions $Q_n(x)$, $P_n(x)$ ($n$ -- matrix size), defining the resolvent kernel $R_n$ of its restriction $K_n^J$ to a subset $J^c$ of $\Re$ also satisfy certain three-term recurrence relations,

\begin{equation}
xQ_n = b_nQ_{n+1}+(a_n+v_n-v_{n+1})Q_n+b_{n-1}(1-\bar u_n)(1+\bar w_n)Q_{n-1},     \label{eq:3Q}  
\end{equation}


\begin{equation}
xP_n = b_{n-1}(1-\bar u_n)(1+\bar w_n)P_{n+1}+(a_{n-1}-v_n+v_{n-1})P_n+b_{n-2}P_{n-1},    \label{eq:3P}   
\end{equation}

\noindent with coefficients explicitly determined by inner products $u_n$, $v_n$ and $w_n$ (\ref{eq:11}) over the subset, which were the additional auxiliary variables introduced by Tracy and Widom in Refs.~\cite{TW-Airy, TW1} to derive PDE for level spacing probabilities. Here $\bar u_n = u_n/b_{n-1}$, $\bar w_n = w_n/b_{n-1}$, and $a_n$, $b_n$ are the coefficients in the original three-term relations (\ref{eq:1}). There are also simple recursions among the above quantities:

\begin{equation}
P_{n+1}(x) = \frac{Q_n(x)}{1-\bar u_n}, \hspace{1.5cm} Q_{n-1}(x) = \frac{P_n(x)}{1+\bar w_n}, \label{eq:16} 
\end{equation}

\begin{equation}
\bar w_{n+1} = \bar u_n/(1-\bar u_n). \label{eq:17} 
\end{equation}

\noindent Renormalized functions

\begin{equation}
r_k(x) = \frac{Q_k(x)}{(1-\bar u_k)^{1/2}},   \label{eq:21} 
\end{equation}

\noindent are  orthonormal w.r.t. the original measure $\exp(-V(x))dx$ restricted to the complement $J$ of the subset. Thus, they also satisfy a usual Christoffel-Darboux formula:

\begin{equation}
R_n(x,y) = \sum_{k=0}^{n-1}r_k(x)r_k(y) = b_{n-1}(1-\bar u_n)^{1/2}(1+\bar w_n)^{1/2}\frac{r_n(x)r_{n-1}(y)-r_{n-1}(x)r_n(y)}{x-y},  \label{eq:22} 
\end{equation}

\noindent and the three-term relations with symmetric (Jacobi) matrix of coefficients

$$
xr_n(x) = b_n(1-\bar u_{n+1})^{1/2}(1+\bar w_{n+1})^{1/2}r_{n+1}(x)+(a_n+v_n-v_{n+1})r_n(x)+
$$

\begin{equation}
+ b_{n-1}(1-\bar u_n)^{1/2}(1+\bar w_n)^{1/2}r_{n-1}(x).  \label{eq:23} 
\end{equation}

\noindent The above formulas add in fact some more specific details to the earlier result of Borodin and Soshnikov~\cite{BorSosh} on restricted RME.
\par As a consequence of the above simple facts and general connection of orthogonal polynomial systems with Toda lattice, the simple universal relations hold with the UE matrix integrals, which are Toda $\tau$-functions:

\begin{equation}
u_n \equiv (\varphi, (I-K_n^J)^{-1}\varphi) = b_{n-1}\left(1-\frac{\tau_{n+1}^J/\tau_{n+1}}{\tau_n^J/\tau_n}\right) = \sqrt\frac{\tau_{n+1}\tau_{n-1}}{(\tau_n)^2}\left(1-\frac{\tau_{n+1}^J/\tau_{n+1}}{\tau_n^J/\tau_n}\right),   \label{eq:30} 
\end{equation}

\begin{equation}
w_n \equiv (\psi, (I-K_n^J)^{-1}\psi) = b_{n-1}\left(\frac{\tau_{n-1}^J/\tau_{n-1}}{\tau_n^J/\tau_n}-1\right) = \sqrt\frac{\tau_{n+1}\tau_{n-1}}{(\tau_n)^2}\left(\frac{\tau_{n-1}^J/\tau_{n-1}}{\tau_n^J/\tau_n}-1\right),   \label{eq:32} 
\end{equation}
 
\begin{equation}
\left.  v_n \equiv (\varphi, (I-K_n^J)^{-1}\psi) \equiv (\psi, (I-K_n^J)^{-1}\varphi) = -\frac{\prt}{\prt t_1}\ln\frac{\tau_n^J}{\tau_n}\right|_{t=0}. \label{eq:41a} 
\end{equation}

\noindent where $\tau_n^J$ is the matrix integral over $J^n$, $\tau_n$ -- the same integral over $\Re$, and $t_1$ is the first Toda time.
\par Due to the last relations, there is a universal system of PDE with respect to {\it each single} endpoint of the spectrum $\xi$ {\it and} $t_1$  satisfied by $T \equiv \ln\tau_n^J$ and $\tau$-ratios $U_n \equiv \frac{\tau_{n+1}^J}{\tau_n^J}$ and $W_n \equiv \frac{\tau_{n-1}^J}{\tau_n^J}$, if we consider the original measure modified by a linear potential, $V(x) \to V(x) - t_1x$:
 
\begin{equation}
\left(\frac{\prt\ln\tau_n^J}{\prt\xi}\right)^2 = R_n^2(\xi,\xi) = -\frac{1}{4}\frac{\prt U_n}{\prt\xi}\frac{\prt W_n}{\prt \xi}\left(\frac{\prt}{\prt \xi}\ln\left(-\frac{\prt U_n/\prt \xi}{\prt W_n/\prt \xi}\right)\right)^2.  \label{eq:51}  
\end{equation}

\begin{equation}
\frac{\prt^2 \ln\tau_n^J}{\prt t_1^2} = -\frac{\prt v_n}{\prt t_1} + \frac{\prt^2 \ln\tau_n}{\prt t_1^2} = U_nW_n,  \label{eq:52} 
\end{equation}

\begin{equation}
\frac{\prt^2 U_n}{\prt \xi\prt t_1} = \xi\frac{\prt U_n}{\prt\xi} + 2\frac{\prt v_n}{\prt\xi}U_n,  \label{eq:53} 
\end{equation}

\begin{equation}
\frac{\prt^2 W_n}{\prt \xi\prt t_1} = -\xi\frac{\prt W_n}{\prt\xi} + 2\frac{\prt v_n}{\prt\xi}W_n,  \label{eq:54} 
\end{equation}

\noindent Here (\ref{eq:52}) is the well-known Toda equation, and the other three PDE satisfied by any such modified UE can be written equivalently as ($t \equiv t_1$, $U = U_n$, $W = W_n$  below)

\begin{equation}
\prt_t(T_{\xi})^2 = U_{\xi}W_{\xi\xi} - W_{\xi}U_{\xi\xi}, \label{eq:57} 
\end{equation}

\begin{equation}
\left(T_{\xi t}\right)^2 = -U_{\xi}W_{\xi},  \label{eq:55} 
\end{equation}

\begin{equation}
WU_{\xi t} - UW_{\xi t}  = \xi T_{\xi tt}. \label{eq:56} 
\end{equation}

\noindent It is possible to eliminate $U$ and $W$ from (\ref{eq:57}), (\ref{eq:55}) and (\ref{eq:52}), and we obtain a {\it universal PDE for the logarithm of gap probability} $\Pr$ ($T = \ln\tau_n^J = \ln\Pr + \ln\tau_n$):

\begin{equation}
\left(T_{\xi t} T_{\xi\xi tt} - T_{\xi tt}T_{\xi\xi t} + 2(T_{\xi t})^3\right)^2 = (T_{\xi})^2\left(4T_{tt}(T_{\xi t})^2 + (T_{\xi tt})^2\right).  \label{eq:T}  
\end{equation}

\noindent In addition, using also (\ref{eq:56}) allows to derive a universal PDE for the logarithm of the ratio $T_+ \equiv U_n \equiv \ln(\tau_{n+1}^J/\tau_n^J)$ (where `prime' stands for the derivative w.r.t. $\xi$):

\begin{equation}
(T_+'')_{tt} - \frac{(T_+')_{tt}T_+''}{T_+'} - (T_+')^2(T_+)_{tt} - \frac{(T_+')_{t}(T_+'')_t}{T_+'} + \frac{(T_+')_{t}^2T_+''}{(T_+')^2} + (\xi - (T_+)_t)T_+'(2(T_+')_t - 1) = 0.  \label{eq:T+}  
\end{equation}
 
\par The universal Toda-AKNS-like structure of (\ref{eq:52})--(\ref{eq:54}) appears in a different guise from Schlesinger equations of isomonodromic approach, where Toda time $t_1$ is not involved, in section VI.

\section*{\normalsize\bf III. RESOLVENT KERNEL${\backslash}$RESTRICTED ENSEMBLES' 3-TERM RELATIONS}

Consider a system of orthogonal functions (quasi-polynomials) $\{\varphi_n\}$ that satisfy a 3-term relation ( $\varphi_n(x) = \pi_n(x)\exp(-V(x)/2)$, $\{\pi_n\}$ are polynomials orthonormal w.r.t. a probability measure $\exp(-V(x))dx$ on real line $\Re$ ):

\begin{equation}
x\varphi_n = b_n\varphi_{n+1} + a_n\varphi_n + b_{n-1}\varphi_{n-1}, \label{eq:1} 
\end{equation}

\noindent which can be written as 

\begin{equation}
x\varphi_n = (\L\varphi)_n, \hspace{1cm} \L = b\Lambda^T + a\Lambda^0 + b\Lambda, \label{eq:1a} 
\end{equation}

\noindent i.e. $\L$ - infinite matrix with $n$-th row $(\L_n)_j = b_n\delta_{n+1,j} + a_n\delta_{n,j} + b_{n-1}\delta_{n-1,j}$ ($\Lambda_{ij} = \delta_{i,j+1}$ - the infinite shift matrix, $a = diag(a_0, a_1, a_2, ...)$, $b = diag(b_0, b_1, b_2, ...)$). The corresponding Christoffel-Darboux formula reads:

\begin{equation}
K_n(x, y) = \frac{\varphi(x)\psi(y) - \psi(x)\varphi(y)}{x - y} = \sum_{k=0}^{n-1}\varphi_k(x)\varphi_k(y), \label{eq:2} 
\end{equation}

\noindent where $\varphi(x) = \sqrt{b_{n-1}}\varphi_n(x)$, $\psi(x) = \sqrt{b_{n-1}}\varphi_{n-1}(x)$. Denote by $K_n^J$ the Fredholm operator with kernel (\ref{eq:2}) acting {\it on the complement} $J^c$ of subset $J$ of $\Re$. The resolvent kernel (we will sometimes use short-hand notation $K$ for $K_n^J$), $R(x, y) \equiv R_n^J(x, y)$, the kernel of $K(I-K)^{-1}$, is defined so that the operator identity holds:

\begin{equation}
(I + R)(I - K) = I.   \label{eq:3} 
\end{equation}

Let us introduce the auxiliary functions playing a prominent role in TW approach~\cite{TW1}:

\begin{equation}
Q(x;J) = (I-K)^{-1}\varphi(x), \hspace{1cm} P(x;J) = (I-K)^{-1}\psi(x). \label{eq:4} 
\end{equation}

\noindent It is known since the seminal work~\cite{IIKS} that if $K$ is an {\it integrable} operator, i.e. its kernel can be expressed by the left-hand side of the Christoffel-Darboux formula (\ref{eq:2}), the resolvent operator is also integrable and its kernel can be written the same way in terms of the functions $Q$ and $P$:

\begin{equation}
R_n(x, y) = \frac{Q(x)P(y) - P(x)Q(y)}{x - y} = b_{n-1}\frac{Q_n(x)P_n(y) - P_n(x)Q_n(y)}{x - y}, \label{eq:5} 
\end{equation}

\noindent where we let $Q = \sqrt{b_{n-1}}Q_n$, $P = \sqrt{b_{n-1}}P_n$, then

\begin{equation}
\varphi_n = (I-K)Q_n, \hspace{2cm} \varphi_{n-1} = (I-K)P_n  \label{eq:6} 
\end{equation}

\noindent Important for our purposes will be inner products $u$, $v$ and $w$, introduced in Ref.~\cite{TW1}:

\begin{equation}
\begin{array}{l} u \equiv u_n = \int_{J^c} Q(x;J)\varphi(x)dx, \ \ \ \ \ w \equiv w_n = \int_{J^c} P(x;J)\psi(x)dx, \\  \\  v \equiv v_n = \int_{J^c} P(x;J)\varphi(x)dx = \int_{J^c} Q(x;J)\psi(x)dx \end{array}   \label{eq:11} 
\end{equation}

\noindent the second equality in the definition of $v$ being true by symmetry of the operator $K_n^J$ and, consequently, also $R_n^J$. Let $\bar u_n = u_n/b_{n-1}$, $\bar w_n = w_n/b_{n-1}$, then

\begin{lemma} 

$$
P_{n+1}(x) = \frac{Q_n(x)}{1-\bar u_n}, \hspace{1.5cm} Q_{n-1}(x) = \frac{P_n(x)}{1+\bar w_n}.   \eqno(\ref{eq:16}) 
$$

$$
\bar w_{n+1} = \bar u_n/(1-\bar u_n).   \eqno(\ref{eq:17}) 
$$

\end{lemma}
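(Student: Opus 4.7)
The plan is to exploit the rank-one recursion of the Christoffel--Darboux kernel,
\[ K_{n+1}(x,y) = K_n(x,y) + \varphi_n(x)\varphi_n(y), \]
which on $L^2(J^c)$ takes the operator form $K_{n+1}^J = K_n^J + |\varphi_n\rangle\langle\varphi_n|$, and then to apply the Sherman--Morrison rank-one update to $(I-K_{n+1}^J)^{-1}$. This turns all three identities into essentially one-line consequences of the definitions (\ref{eq:4})--(\ref{eq:6}) and (\ref{eq:11}).

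Concretely, Sherman--Morrison applied to $\varphi_n$ gives
\[
(I-K_{n+1}^J)^{-1}\varphi_n \;=\; \frac{(I-K_n^J)^{-1}\varphi_n}{\,1-\langle\varphi_n,(I-K_n^J)^{-1}\varphi_n\rangle\,},
\]
since the two terms produced by the rank-one update are both proportional to $(I-K_n^J)^{-1}\varphi_n$ and combine into a single rational multiple. By the shifted form of (\ref{eq:6}) the left-hand side is $P_{n+1}$ and the numerator is $Q_n$, while (\ref{eq:11}) together with $\bar u_n = u_n/b_{n-1}$ identifies the denominator as $1-\bar u_n$; this is the first identity of (\ref{eq:16}). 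Repeating the same calculation with $n\mapsto n-1$ applied to $\varphi_{n-1}$ gives $P_n = Q_{n-1}/(1-\bar u_{n-1})$, equivalently $Q_{n-1}=(1-\bar u_{n-1})P_n$. Substituting this into the definition $\bar w_n = \int\varphi_{n-1}P_n\,dx$ from (\ref{eq:11}) and recognizing $\int\varphi_{n-1}Q_{n-1}\,dx = \bar u_{n-1}$ yields $\bar w_n = \bar u_{n-1}/(1-\bar u_{n-1})$; a shift $n\mapsto n+1$ is exactly (\ref{eq:17}), and the equivalent $1+\bar w_n = 1/(1-\bar u_{n-1})$ upgrades $Q_{n-1}=(1-\bar u_{n-1})P_n$ to the second identity of (\ref{eq:16}).

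The only point requiring care is the functional-analytic housekeeping underlying Sherman--Morrison: the symmetry of $K_n^J$ as a real symmetric Hilbert--Schmidt integral operator makes $|\varphi_n\rangle\langle\varphi_n|$ a bounded self-adjoint rank-one operator on $L^2(J^c)$ with the usual pairing, and the invertibility of $I-K_n^J$ (automatic in the gap-probability setting, where the spectrum of $K_n^J$ lies in $[0,1)$) justifies the formula verbatim. I do not expect a genuine obstacle; once the rank-one structure of the Christoffel--Darboux kernel is exposed, the proof is essentially mechanical, and one could equally well derive all three identities by simply subtracting the two equations $\varphi_n=(I-K_n^J)Q_n=(I-K_{n+1}^J)P_{n+1}$ and reading off $\langle\varphi_n,P_{n+1}\rangle=\bar w_{n+1}$.
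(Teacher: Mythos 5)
Your proof is correct and follows essentially the same route as the paper: the paper's steps (\ref{eq:13})--(\ref{eq:15}) are precisely the rank-one update $K_{n+1}^J-K_n^J\ \dot=\ \varphi_n(x)\varphi_n(y)\chi_{J^c}(y)$ that your Sherman--Morrison formula packages, with the resulting scalars identified via (\ref{eq:11}) exactly as you do. The only cosmetic difference is that you obtain (\ref{eq:17}) directly from the definition of $\bar w_n$ and then upgrade to the second identity of (\ref{eq:16}), whereas the paper derives both identities of (\ref{eq:16}) first and reads off (\ref{eq:17}) as their consistency condition under an index shift.
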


\begin{proof}
\par Functions $P_{n+1}$ and $Q_n$ (as well as $Q_{n-1}$ and $P_n$) are not independent:

\begin{equation}
\varphi_n = (I-K)Q_n = (I-K_{n+1})P_{n+1} = (I-K)P_{n+1} - (K_{n+1}-K)P_{n+1}, \label{eq:13} 
\end{equation}

\noindent which can be rewritten as

\begin{equation}
\left(I - (I-K)^{-1}(K_{n+1}-K)\right)P_{n+1} = Q_n. \label{eq:14}  
\end{equation}

\noindent On the other hand, the operator difference $K_{n+1}^J-K_n^J$ in (\ref{eq:14}) is a simple projector (with $\dot=$ meaning the kernel of the operator on the left):

\begin{equation}
K_{n+1}-K\ {\dot=}\ \varphi_n(x)\varphi_n(y)\chi_{J^c}(y), \label{eq:15}  
\end{equation}

\noindent where $\chi_{J^c}(y)$ is the characteristic functon of $J^c$, equal to $1$ on $J^c$ and $0$ outside. After applying (\ref{eq:15}) in (\ref{eq:14}) and using the definition of quantities $u_n$ and $w_n$, we find simple recursion formulas for $P_{n+1}$ and $Q_{n-1}$ in terms of $Q_n$ and $P_n$, i.e. (\ref{eq:16}). Shifting indices in (\ref{eq:16}), we find (\ref{eq:17}) as their consistency condition.

\end{proof}

\begin{theorem}

Functions $Q_n$ and $P_n$ in the resolvent kernel of a Fredholm operator associated with finite size UE, satisfy three term recurrence relations:

$$
xQ_n = b_nQ_{n+1}+(a_n+v_n-v_{n+1})Q_n+b_{n-1}(1-\bar u_n)(1+\bar w_n)Q_{n-1},  \eqno(\ref{eq:3Q})   
$$

$$
xP_n = b_{n-1}(1-\bar u_n)(1+\bar w_n)P_{n+1}+(a_{n-1}-v_n+v_{n-1})P_n+b_{n-2}P_{n-1}.  \eqno(\ref{eq:3P})   
$$

\end{theorem}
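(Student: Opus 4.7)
The plan is to convert the original three-term relation (\ref{eq:1}) for $\varphi_n$ into one for $Q_n$ (resp.\ $P_n$) by applying the resolvent $(I-K_n^J)^{-1}$ to both sides. The key observation is that $\varphi_n=(I-K_n)Q_n$ and $\varphi_{n-1}=(I-K_n)P_n$, so two of the three terms on the right-hand side of $x\varphi_n = b_n\varphi_{n+1}+a_n\varphi_n+b_{n-1}\varphi_{n-1}$ are immediately converted into $Q_n$ and $P_n$; the only obstructions are (a) the non-commutativity of $x$ with the resolvent and (b) the fact that $\varphi_{n+1}$ (resp.\ $\varphi_{n-2}$) does not live at level $n$.

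For obstruction (a), because $K_n^J$ is integrable with kernel (\ref{eq:2}), a short kernel calculation gives the rank-two commutator $[x,K_n]\dot=\varphi(x)\psi(y)\chi_{J^c}(y)-\psi(x)\varphi(y)\chi_{J^c}(y)$; evaluating on $Q_n$ using the scalings $\varphi=\sqrt{b_{n-1}}\varphi_n$, $\psi=\sqrt{b_{n-1}}\varphi_{n-1}$ together with the definitions (\ref{eq:11}) gives $[x,K_n]Q_n = v_n\varphi_n-u_n\varphi_{n-1}$, and similarly $[x,K_n]P_n = w_n\varphi_n-v_n\varphi_{n-1}$. For obstruction (b), writing $\varphi_{n+1}=(I-K_{n+1})Q_{n+1}$ and using the rank-one identity (\ref{eq:15}), $K_{n+1}-K_n\dot=\varphi_n(x)\varphi_n(y)\chi_{J^c}(y)$, together with $\int\varphi_n Q_{n+1}\,dx=v_{n+1}/b_n$ (from the second form of the definition of $v$ in (\ref{eq:11})) yields
\[
(I-K_n)^{-1}\varphi_{n+1} = Q_{n+1}-\frac{v_{n+1}}{b_n}\,Q_n ;
\]
the symmetric downward shift gives $(I-K_n)^{-1}\varphi_{n-2} = P_{n-1}+(v_{n-1}/b_{n-2})P_n$, starting from $\varphi_{n-2}=(I-K_{n-1})P_{n-1}$ and $K_n-K_{n-1}\dot=\varphi_{n-1}(x)\varphi_{n-1}(y)\chi_{J^c}(y)$.

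Assembling everything, apply $(I-K_n)^{-1}$ to $x\varphi_n=b_n\varphi_{n+1}+a_n\varphi_n+b_{n-1}\varphi_{n-1}$ and rewrite the left-hand side as $xQ_n-(I-K_n)^{-1}[x,K_n]Q_n$ via $\varphi_n=(I-K_n)Q_n$; combining the two steps above produces $xQ_n=b_nQ_{n+1}+(a_n+v_n-v_{n+1})Q_n+(b_{n-1}-u_n)P_n$, and the Lemma~1 substitution $P_n=(1+\bar w_n)Q_{n-1}$ rewrites the last coefficient as $b_{n-1}(1-\bar u_n)(1+\bar w_n)Q_{n-1}$, giving (\ref{eq:3Q}). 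The derivation of (\ref{eq:3P}) is strictly parallel: start from $x\varphi_{n-1}=b_{n-1}\varphi_n+a_{n-1}\varphi_{n-1}+b_{n-2}\varphi_{n-2}$, apply $(I-K_n)^{-1}$, use the $P_n$ versions of the two steps, and finally convert $(b_{n-1}+w_n)Q_n$ into $b_{n-1}(1-\bar u_n)(1+\bar w_n)P_{n+1}$ via $Q_n=(1-\bar u_n)P_{n+1}$. The only delicate point throughout is the bookkeeping of mixed-index inner products $\int\varphi_j Q_k\,dx$, $\int\varphi_j P_k\,dx$ with $|j-k|\le 1$: identifying each with the appropriate one of $u_\cdot,v_\cdot,w_\cdot$ up to a power of $b_\cdot$ relies on the symmetry $K_n^J=(K_n^J)^T$, which underlies the two-sided definition of $v$ in (\ref{eq:11}); once that is settled, the remaining manipulations are purely algebraic.
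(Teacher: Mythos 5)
Your proof is correct and follows essentially the same route as the paper's: both apply $(I-K_n^J)^{-1}$ to the three-term relation (\ref{eq:1}), absorb the non-commutativity of multiplication by $x$ with the resolvent via the rank-two kernel of the commutator (your $[x,K_n]Q_n=v_n\varphi_n-u_n\varphi_{n-1}$ is exactly the paper's computation (\ref{eq:10})), handle the level mismatch with the rank-one projector (\ref{eq:15}) (the paper packages this as the commutator $[\L,K]$ in (\ref{eq:8})), and finish with Lemma~1. The only cosmetic difference is that you land on the coefficient $(b_{n-1}-u_n)P_n$ directly, whereas the paper reaches $b_{n-1}Q_{n-1}$ plus a correction involving $\bar u_{n-1}$ and recombines via (\ref{eq:17}); the two bookkeepings agree.
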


\begin{proof}
\noindent From (\ref{eq:1a}) and (\ref{eq:6}) we have

\begin{equation}
x\varphi_n = (\L(I-K)Q)_n = (I-K)(\L Q)_n - ([\L,K]Q)_n, \label{eq:7} 
\end{equation}

\noindent where the commutator gives

$$
([\L,K]Q)_n = (\L KQ)_n - K(\L Q)_n = 
$$

$$
= (b_nK_{n+1}Q_{n+1}+a_nKQ_n+b_{n-1}K_{n-1}Q_{n-1}) - K(b_nQ_{n+1}+a_nQ_n+b_{n-1}Q_{n-1}) =   
$$

\begin{equation}
= b_n(K_{n+1}-K)Q_{n+1} + b_{n-1}(K_{n-1}-K)Q_{n-1}.  \label{eq:8} 
\end{equation}

\noindent Similarly, 

$$
x\varphi_{n-1} = (I-K)(\L P)_n - ([\L,K]P)_n \eqno(\ref{eq:7}a) 
$$

\noindent with

$$
([\L,K]P)_n = b_{n-1}(K_{n+1}-K)P_{n+1} + b_{n-2}(K_{n-1}-K)P_{n-1}. \eqno(\ref{eq:8}a) 
$$

\noindent Applying operator $(I-K)^{-1}$ to (\ref{eq:7}), (\ref{eq:7}a) we get

\begin{equation}
(I-K)^{-1}x\varphi_n = (\L Q)_n - (I-K)^{-1}([\L,K]Q)_n,  \label{eq:9} 
\end{equation}

$$
(I-K)^{-1}x\varphi_{n-1} = (\L P)_n - (I-K)^{-1}([\L,K]P)_n  \eqno(\ref{eq:9}a) 
$$

\par For the left-hand side of (\ref{eq:9}) we derive (imitating similar derivations of Refs.~\cite{IIKS, TW1}, see also Ref.~\cite{BIK}):

$$
(I-K)^{-1}x\varphi_n = \frac{1}{\sqrt{b_{n-1}}}(I-K)^{-1}x\varphi = \frac{1}{\sqrt{b_{n-1}}}(I+R)x\varphi = 
$$

$$
= \frac{1}{\sqrt{b_{n-1}}}\int_J\left(\delta(x-y)+\frac{Q(x)P(y)-Q(y)P(x)}{x-y}\right)\left(x\varphi(y) + (y-x)\varphi(y)\right)dy =   
$$

\begin{equation}
= \frac{1}{\sqrt{b_{n-1}}}\left(x(I+R)\varphi - \int_J(Q(x)P(y)-Q(y)P(x))\varphi(y)dy\right) = xQ_n - vQ_n + uP_n, \label{eq:10} 
\end{equation}

\noindent Completely analogously we get

\begin{equation}
(I-K)^{-1}x\varphi_{n-1} = xP_n - wQ_n + vP_n. \label{eq:12} 
\end{equation}

\noindent On the right-hand side of (\ref{eq:9}) and (\ref{eq:9}a), respectively, using (\ref{eq:8}), (\ref{eq:8}a), (\ref{eq:15}) and the definitions (\ref{eq:11}), we get

\begin{equation}
(I-K)^{-1}([\L,K]Q)_n = v_{n+1}Q_n - b_{n-1}\u_{n-1}P_n, \label{eq:18} 
\end{equation}

$$
(I-K)^{-1}([\L,K]P)_n = b_{n-1}\w_{n+1}Q_n - v_{n-1}P_n. \eqno(\ref{eq:18}a) 
$$

\noindent Finally, plugging (\ref{eq:10}), (\ref{eq:18}) into (\ref{eq:9}), and (\ref{eq:12}), (\ref{eq:18}a) into (\ref{eq:9}a), we come to the three-term relations (\ref{eq:3Q}), (\ref{eq:3P}). 
\end{proof}

\par Consider now the difference of resolvent kernels for the operators of consecutive rank:

$$
R_{n+1}(x,y)-R_n(x,y) = \frac{Q_n(x)Q_n(y)}{1-\bar u_n} = Q_n(x)P_{n+1}(y) = P_{n+1}(x)Q_n(y),
$$

\noindent therefore

$$
R_n(x,y) = b_{n-1}\frac{Q_n(x)P_n(y)-P_n(x)Q_n(y)}{x-y} = \sum_{k=0}^{n-1}Q_k(x)P_{k+1}(y).
$$

\noindent If we introduce the following normalization for the orthogonal functions:

$$
r_k(x) = \frac{Q_k(x)}{(1-\bar u_k)^{1/2}}, \eqno(\ref{eq:21})  
$$

\noindent then one can easily check that the new functions $r_k$ are  orthonormal w.r.t. the measure $\exp(-V(x))dx$ restricted to $J$, i.e. the original measure on $J$ and zero outside of it. Thus, in terms of these functions we can write out 

\begin{theorem}

The Christoffel-Darboux formula for the resolvent kernel holds:

$$
R_n(x,y) = \sum_{k=0}^{n-1}r_k(x)r_k(y) = b_{n-1}(1-\bar u_n)^{1/2}(1+\bar w_n)^{1/2}\frac{r_n(x)r_{n-1}(y)-r_{n-1}(x)r_n(y)}{x-y}, \eqno(\ref{eq:22}) 
$$

\noindent and the three-term relations with symmetric (Jacobi) matrix of coefficients are

$$
xr_n(x) = b_n(1-\bar u_{n+1})^{1/2}(1+\bar w_{n+1})^{1/2}r_{n+1}(x)+(a_n+v_n-v_{n+1})r_n(x)+
$$
                    
$$
+ b_{n-1}(1-\bar u_n)^{1/2}(1+\bar w_n)^{1/2}r_{n-1}(x). \eqno(\ref{eq:23}) 
$$

\end{theorem}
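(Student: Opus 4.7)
The plan is to reduce both claims in Theorem 2 to substitutions into results already derived, with the single recursion $(1+\bar w_n)(1-\bar u_{n-1}) = 1$ (the $n \to n-1$ shift of (\ref{eq:17})) serving as the algebraic key that converts the asymmetric $(Q,P)$-basis into the symmetric Jacobi normalization.

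For the first equality in (\ref{eq:22}), I use the telescoping identity
\[
R_{n+1}(x,y) - R_n(x,y) \;=\; \frac{Q_n(x)\, Q_n(y)}{1 - \bar u_n} \;=\; r_n(x)\, r_n(y),
\]
displayed just before the theorem, and sum from $0$ to $n-1$ with the convention $R_0 \equiv 0$. For the second equality (the determinantal CD form), I start from the IIKS expression (\ref{eq:5}) and insert $Q_n = (1-\bar u_n)^{1/2} r_n$ together with $P_n = (1+\bar w_n)\, Q_{n-1} = (1+\bar w_n)(1-\bar u_{n-1})^{1/2} r_{n-1}$, the latter obtained by composing the second formula of (\ref{eq:16}) with (\ref{eq:21}). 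This produces a prefactor $b_{n-1}(1-\bar u_n)^{1/2}(1+\bar w_n)(1-\bar u_{n-1})^{1/2}$, which collapses to $b_{n-1}(1-\bar u_n)^{1/2}(1+\bar w_n)^{1/2}$ once $(1+\bar w_n)(1-\bar u_{n-1}) = 1$ is invoked.

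For the three-term relation (\ref{eq:23}) I substitute $Q_k = (1-\bar u_k)^{1/2} r_k$ into (\ref{eq:3Q}) from Theorem 1 and divide through by $(1-\bar u_n)^{1/2}$. The diagonal entry $a_n + v_n - v_{n+1}$ carries over unchanged; the coefficient of $r_{n+1}$ becomes $b_n (1-\bar u_{n+1})^{1/2}/(1-\bar u_n)^{1/2}$, and the coefficient of $r_{n-1}$ becomes $b_{n-1}(1-\bar u_n)^{1/2}(1+\bar w_n)(1-\bar u_{n-1})^{1/2}$. Applying $(1+\bar w_{n+1})(1-\bar u_n) = 1$ to the former (so that $1/(1-\bar u_n)^{1/2} = (1+\bar w_{n+1})^{1/2}$) and the shifted form $(1+\bar w_n)(1-\bar u_{n-1}) = 1$ to the latter yields precisely the symmetric coefficients in (\ref{eq:23}).

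The main obstacle is not computational but purely organizational: recognizing that both off-diagonal coefficients, as well as the CD prefactor, simplify under the same Lemma 1 identity, and keeping the $n$ versus $n-1$ shifts carefully separated. Once this single relation is extracted from (\ref{eq:17}), the symmetry of the Jacobi matrix in (\ref{eq:23}) and of the bilinear prefactor in (\ref{eq:22}) are seen to be two manifestations of the same elementary consistency condition between $\bar u$ and $\bar w$.
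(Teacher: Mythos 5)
Your proposal is correct and follows essentially the same route as the paper, which introduces the normalization (\ref{eq:21}) immediately after the telescoping identity $R_{n+1}-R_n = Q_n(x)Q_n(y)/(1-\bar u_n)$ and then obtains (\ref{eq:22}) and (\ref{eq:23}) by direct substitution into (\ref{eq:5}) and (\ref{eq:3Q}); the paper leaves these substitutions implicit, and your write-up supplies them correctly, including the key use of the shifted consistency relation $(1+\bar w_n)(1-\bar u_{n-1})=1$ from (\ref{eq:17}) to symmetrize the coefficients.
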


At this point one should recall the contents of the paper by Borodin and Soshnikov~\cite{BorSosh}. There the setup is completely identical to ours: two complementary subsets $J$ and $J^c = \Re\setminus J$ of $\Re$ are considered, and the corresponding operator defined by (3). Just the name ``resolvent kernel" is not used, another name -- ``Janossy densities" -- is applied. The authors, however, in fact give a simple proof of orthogonality of the above functions $Q$, $P$ or $r$, so we just restate it briefly here in terms of Refs.~\cite{IIKS, TW1}. By definition of functions $Q_n$ we can write

$$
\varphi_n = (I - K_n^J)Q_n = Q_n - \sum_{k=0}^{n-1}\varphi_k(\varphi_k, Q_n)_{J^c},
$$

\noindent or

\begin{equation}
Q_n = \varphi_n + \sum_{k=0}^{n-1}A_{nk}\varphi_k,  \label{eq:24} 
\end{equation}

\noindent where $A_{nk} = (\varphi_k, Q_n)_{J^c}$ -- the (matrix of) inner products, so that $A_{nn} = \bar u_n$. Taking inner products of (\ref{eq:24}) with $\varphi_k$, $k  \le n$, over $\Re$, we see that

$$
(\varphi_k, Q_n)_{\Re} = (\varphi_k, Q_n)_{J^c}, \text{ for } k < n,
$$

\noindent so $\{Q_n\}$ are orthogonal on $J$ and

$$
(\varphi_n, Q_n)_{\Re} = (\varphi_n, \varphi_n)_{\Re} = 1.
$$

\noindent It follows that

$$
(Q_n, Q_n)_{J} = (\varphi_n, Q_n)_{J} = (\varphi_n, Q_n)_{\Re} - (\varphi_n, Q_n)_{J^c} = 1- \u_n,
$$

\noindent which shows that functions $\{r_n(x; J)\}$ defined by (\ref{eq:21}) are indeed orthonormal on $J \subset \Re$. Following Ref.~\cite{BorSosh}, let us introduce the $n \times n$ matrices with elements ($j, k  = 0,1,\dots,n-1$) 

$$
(G_J)_{jk} = \int_J \varphi_j(x)\varphi_k(x)dx, \ \ \  (G_{J^c})_{jk} = \delta_{jk}-(G_J)_{jk} = \int_{J^c} \varphi_j(x)\varphi_k(x)dx
$$

\noindent Taking inner products of (\ref{eq:24}) with $\varphi_k$ over $J^c$, one can easily see that

$$
\delta_{jk} + A_{jk} = \left(\delta_{jk} - (G_{J^c})_{jk}\right)^{-1} = \left((G_J)_{jk}\right)^{-1},
$$

\noindent i.e. the relation for the restriction of $R_n^J$ to the $n$-dimensional subspace $\H = \text{span} \{\varphi_j\}_{j=0,\dots,n-1}$ equal to $G_{J^c}(I-G_{J^c})^{-1}$, which was crucial in the argument of Ref.~\cite{BorSosh}. \\
{\it Remark.} Also, see Proposition 2.5 of Ref.~\cite{Bor}, if we take the Gauss decomposition of $G_J$, $G_J = S^{-1}(S^T)^{-1}$, where $S$ is lower-triangular, then matrix $S$ gives the coefficients of expansion of $r_k$ w.r.t. the $\varphi_j$ basis: $r_k(x; J) = \sum_{j=0}^{n-1}S_{kj}(J)\varphi_j(x)$, $k = 0, \dots, n-1$.
\par A more recent consideration of Janossy densities for general UE at the spectral edge, using Riemann-Hilbert problem techniques was undertaken in Ref.~\cite{RiZh}, also without making explicit connections with the framework of Ref.~\cite{TW1}.

\section*{\normalsize\bf IV. UNIVERSAL RELATIONS BETWEEN TW VARIABLES AND TODA $\tau$-FUNCTIONS}

By definitions of the matrix integral over $J^n$, $\tau_n^J$, and the resolvent operator $R_n^J = K_n^J(I-K_n^J)^{-1}$, we have the fundamental relation between them:

\begin{equation}
R_n^J(a_j,a_j) = (-1)^{j-1}\frac{\prt\ln\tau_n^J}{\prt a_j}.  \label{eq:25} 
\end{equation}

\noindent This is just a consequence of the defining connection, the equality of two different expressions for the probability of all eigenvalues to lie in $J$,

\begin{equation}
\det(I - K_n^J) = \frac{\tau_n^J}{\tau_n},  \label{eq:26} 
\end{equation}

\noindent where $\tau_n$ is the corresponding matrix integral over whole $\Re^n$. Moreover, we have

\begin{theorem}

$$
u_n \equiv (\varphi, (I-K_n^J)^{-1}\varphi) = b_{n-1}\left(1-\frac{\tau_{n+1}^J/\tau_{n+1}}{\tau_n^J/\tau_n}\right) = \sqrt\frac{\tau_{n+1}\tau_{n-1}}{(\tau_n)^2}\left(1-\frac{\tau_{n+1}^J/\tau_{n+1}}{\tau_n^J/\tau_n}\right).  \eqno(\ref{eq:30}) 
$$

$$
w_n \equiv (\psi, (I-K_n^J)^{-1}\psi) = b_{n-1}\left(\frac{\tau_{n-1}^J/\tau_{n-1}}{\tau_n^J/\tau_n}-1\right) = \sqrt\frac{\tau_{n+1}\tau_{n-1}}{(\tau_n)^2}\left(\frac{\tau_{n-1}^J/\tau_{n-1}}{\tau_n^J/\tau_n}-1\right).  \eqno(\ref{eq:32}) 
$$

\end{theorem}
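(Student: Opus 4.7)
The strategy has two parts. First, derive the formula for $u_n$ by combining the rank-one operator identity (\ref{eq:15}) with the Fredholm-determinant relation (\ref{eq:26}). Second, obtain the formula for $w_n$ essentially for free from the consistency relation (\ref{eq:17}) already established in Lemma 1, rather than repeating a second determinant computation with a different test function.

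For $u_n$: equation (\ref{eq:15}) says that, regarded as operators on $L^2(J^c)$, $K_{n+1}^J = K_n^J + \varphi_n\otimes\varphi_n$, a rank-one perturbation. The Weinstein--Aronszajn identity (the ``matrix-determinant lemma'' extended to trace-class operators) then gives
$$\det(I - K_{n+1}^J) = \det(I - K_n^J)\bigl(1 - (\varphi_n,(I-K_n^J)^{-1}\varphi_n)_{J^c}\bigr) = \det(I - K_n^J)(1 - \bar u_n),$$
since by the definitions in (\ref{eq:4}) and (\ref{eq:11}) and the normalization $\varphi = \sqrt{b_{n-1}}\varphi_n$ one has $\bar u_n = u_n/b_{n-1} = (\varphi_n,Q_n)_{J^c} = (\varphi_n,(I-K_n^J)^{-1}\varphi_n)_{J^c}$. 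Applying (\ref{eq:26}) on both sides yields
$$\frac{\tau_{n+1}^J/\tau_{n+1}}{\tau_n^J/\tau_n} = 1 - \bar u_n,$$
and multiplying through by $b_{n-1}$ produces the first equality of (\ref{eq:30}).

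For $w_n$: I would shift the index in the consistency relation $\bar w_{n+1} = \bar u_n/(1-\bar u_n)$ of Lemma 1 to obtain $1+\bar w_n = 1/(1-\bar u_{n-1})$. Substituting the expression for $\bar u_{n-1}$ just derived gives $1 + \bar w_n = (\tau_{n-1}^J/\tau_{n-1})/(\tau_n^J/\tau_n)$, and multiplying by $b_{n-1}$ is the first equality of (\ref{eq:32}).

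The alternative rightmost expressions in (\ref{eq:30}) and (\ref{eq:32}) follow by replacing the prefactor $b_{n-1}$ using the classical orthogonal-polynomial/Toda identity $b_{n-1}^2 = \tau_{n+1}\tau_{n-1}/\tau_n^2$, which is immediate from writing the $\tau_n$ as Hankel determinants of the moments of $\exp(-V(x))dx$ and identifying the squared recurrence coefficient in (\ref{eq:1}) with the ratio of adjacent such determinants. The only delicate step is justifying the Weinstein--Aronszajn formula in the trace-class setting, but for the smooth integrable kernels considered here this is standard; conceptually, the main thing to notice is that (\ref{eq:15}) is \emph{already} a rank-one perturbation statement, so Fredholm determinants behave here just like finite-dimensional ones and the whole proof reduces to two lines of algebra plus one invocation of Lemma 1.
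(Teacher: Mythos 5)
Your proof is correct, but it takes a genuinely different route from the paper's. The paper argues differentially: it combines the identity $R_{n+1}^J-R_n^J \,\dot=\, r_n(x)r_n(y)$ with the relation (\ref{eq:25}) between $R_n^J(a_j,a_j)$ and $\prt_{a_j}\ln\tau_n^J$ to get $r_n(a_j)^2=(-1)^{j-1}\prt_{a_j}\ln(\tau_{n+1}^J/\tau_n^J)$, and separately uses the Tracy--Widom equation (\ref{eq:28}) together with (\ref{eq:21}) to get $r_n(a_j)^2=(-1)^{j-1}\prt_{a_j}\ln(1-\u_n)$; equating the two logarithmic derivatives with respect to every endpoint and (implicitly) integrating, with the constant fixed by the limit $J\to\Re$, yields (\ref{eq:30}). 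Your argument instead reads (\ref{eq:15}) as the rank-one perturbation $K_{n+1}^J=K_n^J+\varphi_n\otimes\varphi_n$ and applies the determinant identity $\det(I-K-v\otimes v)=\det(I-K)\bigl(1-(v,(I-K)^{-1}v)\bigr)$ together with (\ref{eq:26}); this is purely algebraic, requires no differentiation with respect to the endpoints and no fixing of an integration constant, and since $K_n^J$ has finite rank there is not even a trace-class issue to worry about (your caution there is unnecessary for finite $n$). Your derivation of (\ref{eq:32}) from Lemma 1 via $1+\w_n=1/(1-\u_{n-1})$ is exactly the shortcut the paper itself points out in the remark immediately following its proof (it formally proves (\ref{eq:32}) ``completely similarly'' using (\ref{eq:31})). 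What the paper's route buys is the intermediate identity for $r_n(a_j)^2$ and a direct tie-in with the TW differential system used throughout; what yours buys is brevity and the elimination of the integration step. The final rewriting of $b_{n-1}$ via $b_{n-1}^2=\tau_{n+1}\tau_{n-1}/\tau_n^2$ is the paper's own equation (\ref{eq:37}) at $t=0$, so that step is fine as well.
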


\begin{proof}
Let us consider the values of our orthonormal functions $r_n(x)$ at the endpoints of $J$, $a_j$. From the expressions preceding the definition (\ref{eq:21}) of functions $r_n$ and (\ref{eq:25}) it follows that

\begin{equation}
r_n(a_j)^2 = R_{n+1}^J(a_j,a_j)-R_n^J(a_j,a_j) = (-1)^{j-1}\frac{\prt\ln(\tau_{n+1}^J/\tau_n^J)}{\prt a_j}. \label{eq:27} 
\end{equation}

\noindent On the other hand, let us recall the universal TW equations~\cite{TW1}

\begin{equation}
\frac{\prt u}{\prt a_j} = (-1)^j Q(a_j; J)^2,  \label{eq:28} 
\end{equation}

\begin{equation}
\frac{\prt w}{\prt a_j} = (-1)^j P(a_j; J)^2, \label{eq:31} 
\end{equation}

\begin{equation}
\frac{\prt v}{\prt a_j} = (-1)^j Q(a_j;J)P(a_j; J).  \label{eq:33} 
\end{equation}

\noindent From (\ref{eq:28}) and (\ref{eq:21}) we have also 

\begin{equation}
r_n(a_j)^2 = (-1)^{j-1}\frac{\prt\ln(1-\bar u_n)}{\prt a_j}. \label{eq:29} 
\end{equation}

\noindent Comparing the expressions (\ref{eq:27}) and (\ref{eq:29}), we get the simple universal relation (\ref{eq:30}) between $\bar u_n$ and $\tau$-ratios. The other universal relation, for $\bar w_n$ and corresponding $\tau$-ratios is obtained completely similarly, using (\ref{eq:31}) and Lemma 1.
\end{proof}
 
\noindent In fact, if we recall the recurrence relation between $\bar w_n$ and $\bar u_{n-1}$ (\ref{eq:17}), we see that (\ref{eq:32}) is the same relation as (\ref{eq:30}), just written in terms of $\bar w$ instead of $\bar u$. The last two formulas were derived in Ref.~\cite{IR1} for the Gaussian matrices only, using the specific properties of this case. Now it is clear that they hold for all UE. 
\par For the Gaussian case we also obtained the following relation~\cite{IR1}:

\begin{equation}
2v \equiv 2(\varphi, (I-K_n^J)^{-1}\psi) = \B_{-1}\ln\tau_n^J.  \label{eq:34} 
\end{equation}

\noindent This relation, as we will see, is not true for general UE. However, it turns out that it has a universal analogue. To see it, the framework of Toda lattice for UE, established in ASvM approach~\cite{ASvM, AvM7}, see also Ref.~\cite{PvM2007}, sections 5--8, is crucial. Following these works, let us consider the UE with modified probability measure:  $\rho(x) = e^{-V(x)} \to \rho_t(x) = e^{-V(x)+\sum_{k=1}^{\infty}t_kx^k}$, $t = (t_1, t_2, t_3, ...)$, then the matrix integral

\begin{equation}
\tau_n^J(t) = \frac{1}{n!}\int_{J^n} \Delta_n^2(x) \prod_1^n \rho_t(x_i) dx_i   \label{eq:35} 
\end{equation}

\noindent is a $\tau$-function of integrable hierarchies -- KP and 1-dimensional Toda lattice -- as is the corresponding integral over whole $\Re^n$, $\tau_n(t)$. The orthogonal functions for $t$-deformed weight satisfy $t$-dependent 3-term recurrence relation

$$
x\varphi_n(x;t) = b_n(t)\varphi_{n+1}(x;t) + a_n(t)\varphi_n(x;t) + b_{n-1}(t)\varphi_{n-1}(x;t),
$$

\noindent and the coefficients are expressed in terms of the Toda $\tau$-functions:

\begin{equation}
a_n(t) = \frac{\prt}{\prt t_1}\ln\frac{\tau_{n+1}(t)}{\tau_n(t)}, \label{eq:36} 
\end{equation}

\begin{equation}
b_{n-1}^2(t) = \frac{\tau_{n+1}(t)\tau_{n-1}(t)}{\tau_n^2(t)}.  \label{eq:37} 
\end{equation}

\noindent Let us denote $v_+ = v_{n+1}-v_n-a_n$, $v_- = v_{n-1}-v_n+a_{n-1}$. Then 3-term relation (\ref{eq:23}) together with the same Toda lattice correspondence for restricted (``resolvent") ensemble means:

\begin{equation}
b_{n-1}^2 = \frac{\tau_{n+1}\tau_{n-1}}{\tau_n^2} \rightarrow b_{n-1}^2(1-\bar u_n)(1+\bar w_n) = \frac{\tau_{n+1}^J\tau_{n-1}^J}{(\tau_n^J)^2}, \label{eq:38} 
\end{equation}

\begin{equation}
\left. a_n = \frac{\prt}{\prt t_1}\ln\frac{\tau_{n+1}}{\tau_n}\right|_{t=0}, \hspace{0.5cm} a_n \rightarrow -v_+, \label{eq:39} 
\end{equation}

\begin{equation}
\left.  v_+ = -\frac{\prt}{\prt t_1}\ln\frac{\tau_{n+1}^J}{\tau_n^J}\right|_{t=0}, \hspace{0.5cm} \left.  v_- = -\frac{\prt}{\prt t_1}\ln\frac{\tau_{n-1}^J}{\tau_n^J}\right|_{t=0}.  \label{eq:40} 
\end{equation}

\noindent From the last relations one can immediately conclude that 

\begin{theorem}

The inner product $v_n(t)$ can be expressed as

\begin{equation}
v_n(t) = -\frac{\prt}{\prt t_1}\ln\frac{\tau_n^J(t)}{\tau_n(t)}. \label{eq:41} 
\end{equation}

\noindent and the universal relation for the original (unmodified) ensemble follows by taking the point $t=0$:

$$
\left.  v_n = -\frac{\prt}{\prt t_1}\ln\frac{\tau_n^J}{\tau_n}\right|_{t=0}. \eqno(\ref{eq:41a}) 
$$

\end{theorem}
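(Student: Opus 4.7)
The plan is to read off Theorem 4 by telescoping the two Toda identities (\ref{eq:39}) and (\ref{eq:40}) that have already been established for the restricted ensemble, and then pinning down a single $n$-independent constant of integration.

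Concretely, I would first rearrange $v_+=v_{n+1}-v_n-a_n$ together with the $v_+$--line of (\ref{eq:40}) (extended to general Toda time $t$, exactly as in (\ref{eq:39}) for $a_n(t)$) to get
\[
v_{n+1}(t)-v_n(t) \;=\; a_n(t) - \frac{\partial}{\partial t_1}\ln\frac{\tau_{n+1}^J(t)}{\tau_n^J(t)}.
\]
Substituting $a_n(t)=\partial_{t_1}\ln(\tau_{n+1}(t)/\tau_n(t))$ from (\ref{eq:39}) collapses the right-hand side into a telescoping difference:
\[
v_{n+1}(t)-v_n(t) \;=\; g_{n+1}(t)-g_n(t), \qquad g_k(t):=-\frac{\partial}{\partial t_1}\ln\frac{\tau_k^J(t)}{\tau_k(t)}.
\]
Hence $v_n(t)-g_n(t)$ is independent of $n$, and it only remains to identify this constant with zero.

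For the base case I would invoke $n=0$ under the standard conventions $\varphi_{-1}\equiv 0$ and $\tau_0^J=\tau_0=1$: then $\psi=\sqrt{b_{-1}}\varphi_{-1}=0$ so $v_0=0$ by the definition in (\ref{eq:11}), while $g_0=-\partial_{t_1}\ln(1/1)=0$. The constant therefore vanishes for every $t$, yielding (\ref{eq:41}); restricting to $t=0$ gives (\ref{eq:41a}). In case one prefers to avoid the $n=0$ convention, the same conclusion follows by letting $J\uparrow\Re$: trace-norm continuity of $K_n^J$ in the endpoints makes $v_n\to 0$, and $\tau_n^J\to\tau_n$ makes $g_n\to 0$, so the limit of the (constant) difference must be zero.

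The only nontrivial input is (\ref{eq:40}), which rests on the observation of Section III that the orthonormal functions $r_k$ are attached to the genuinely positive measure $e^{-V(x)}\chi_J(x)\,dx$ on $J$, so their Jacobi matrix (\ref{eq:23}) is the Toda Jacobi matrix built from the $\tau$-sequence $\{\tau_n^J(t)\}$. Identifying its diagonal entry $a_n+v_n-v_{n+1}$ with $\partial_{t_1}\ln(\tau_{n+1}^J/\tau_n^J)$ via (\ref{eq:36}) applied to the restricted ensemble is the one step that carries mathematical content; everything afterwards is sign-bookkeeping and a telescoping of log-derivatives, so I do not expect a serious obstacle beyond the small care required at the $n=0$ boundary.
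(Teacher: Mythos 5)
Your proof matches the paper's own route: the paper obtains (\ref{eq:39})--(\ref{eq:40}) by identifying the symmetric Jacobi matrix (\ref{eq:23}) of the restricted ensemble with the Toda Jacobi matrix of $\{\tau_n^J(t)\}$, and then asserts that (\ref{eq:41}) follows ``immediately'' --- which is exactly your telescoping $v_{n+1}-v_n=g_{n+1}-g_n$ anchored at the $n=0$ base case with $\varphi_{-1}\equiv 0$ and $\tau_0^J=\tau_0=1$. One small caveat: your fallback argument via $J\uparrow\Re$ does not quite work as stated, because the telescoping only shows $v_n-g_n$ is constant in $n$ (it may still depend on $J$), so the limit only proves the constant vanishes asymptotically in $J$; the $n=0$ base case is the correct anchor and suffices.
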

 
\noindent Now, if we continue to consider the $t$-modified RME, the above universal relations together with 3-term recursion relations for the restricted measure (or resolvent kernel) ensembles will lead us to some universal PDE for matrix integral $\tau$-functions, containing derivatives w.r.t. the endpoints {\it and} the first Toda time derivatives.

\section*{\normalsize\bf V. UNIVERSAL PDE FROM 3-TERM RELATIONS FOR RESOLVENT KERNEL}

Let us consider our 3-term relations from section 2 taken at a spectral endpoint $\xi$. Unlike the notations in Ref.~\cite{TW1} we denote $q_n = Q(\xi; J)$, $p_n = P(\xi; J)$, i.e. the subscript refers to the size of the matrix rather than to the order number of the endpoint as in Ref.~\cite{TW1}. Also let $\bar q_n = Q_n(\xi; J)$, $\bar p_n = P_n(\xi; J)$ (see (\ref{eq:5}), (\ref{eq:6})). Then we have from 3-term relations (\ref{eq:3Q}), (\ref{eq:3P}):

\begin{equation}
b_n\bar q_{n+1} = (\xi - a_n + v_{n+1} - v_n)\bar q_n - b_{n-1}(1-\bar u_n)\bar p_n,  \label{eq:42} 
\end{equation}

\begin{equation}
b_{n-2}\bar p_{n-1} = (\xi - a_{n-1} + v_n - v_{n-1})\bar p_n - b_{n-1}(1+\bar w_n)\bar q_n.  \label{eq:43}  
\end{equation}

\begin{theorem}

For all finite size UE, ratios of matrix integrals  $U_n \equiv \tau_{n+1}^J/\tau_n^J$, $W_n \equiv \tau_{n-1}^J/\tau_n^J$ and function $v_n(t) = -\prt t_1\ln(\tau_n^J(t)/\tau_n(t))$ satisfy the following PDE w.r.t. any specific spectral endpoint $\xi \in \prt J$ and the first 1-Toda time $t_1$:

$$
\frac{\prt^2 U_n}{\prt \xi\prt t_1} = \xi\frac{\prt U_n}{\prt\xi} + 2\frac{\prt v_n}{\prt\xi}U_n, \eqno(\ref{eq:53}) 
$$

$$
\frac{\prt^2 W_n}{\prt \xi\prt t_1} = -\xi\frac{\prt W_n}{\prt\xi} + 2\frac{\prt v_n}{\prt\xi}W_n.  \eqno(\ref{eq:54})  
$$

\end{theorem}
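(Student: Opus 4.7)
The plan is to verify (\ref{eq:53}) by expressing both sides in terms of the TW quantities $\bar q_n, \bar p_n, \bar u_n, v_n$ at the endpoint $\xi = a_j$, and reducing the identity to a direct consequence of the three-term relation (\ref{eq:42}) combined with Lemma~1. Abbreviate $U_n^{(0)} := \tau_{n+1}/\tau_n$ (independent of $\xi$) and $\epsilon := (-1)^j$ for the sign attached to $a_j$ in (\ref{eq:28})--(\ref{eq:33}). Theorem~2 then reads $U_n = U_n^{(0)}(1 - \bar u_n)$, and (\ref{eq:40}) gives $\partial_{t_1} \ln U_n = -v_+ = a_n + v_n - v_{n+1}$.

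The first step is to compute $\partial_\xi U_n$. Differentiating $U_n = U_n^{(0)}(1-\bar u_n)$ and using (\ref{eq:28}) together with $Q(\xi) = \sqrt{b_{n-1}}\,\bar q_n$ yields $\partial_\xi U_n = -\epsilon U_n^{(0)} \bar q_n^2$. Next, differentiating $\partial_{t_1} U_n = -v_+ U_n$ in $\xi$---and observing that $a_n$ is $\xi$-independent---gives $\partial_\xi \partial_{t_1} U_n = (\partial_\xi v_n - \partial_\xi v_{n+1}) U_n - v_+\, \partial_\xi U_n$. Substituting the expression for $\partial_\xi U_n$ into both sides of (\ref{eq:53}) and using $2 \partial_\xi v_n = 2\epsilon b_{n-1}\bar q_n \bar p_n$ from (\ref{eq:33}), the identity (\ref{eq:53}) collapses to the single algebraic claim
\begin{equation*}
(\partial_\xi v_{n+1} + \partial_\xi v_n)\, U_n \;=\; \epsilon(\xi + v_+)\, U_n^{(0)} \bar q_n^2.
\end{equation*}

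The second step derives this claim from (\ref{eq:42}) and Lemma~1. Apply (\ref{eq:33}) at matrix size $n+1$ to get $\partial_\xi v_{n+1} = \epsilon b_n \bar q_{n+1} \bar p_{n+1}$, and substitute $\bar p_{n+1} = \bar q_n/(1 - \bar u_n)$ from Lemma~1. The recurrence (\ref{eq:42}), rewritten as $b_n \bar q_{n+1} = (\xi + v_+) \bar q_n - b_{n-1}(1 - \bar u_n) \bar p_n$, together with (\ref{eq:33}) at size $n$ used in the form $\epsilon b_{n-1} \bar q_n \bar p_n = \partial_\xi v_n$, yields $\partial_\xi v_{n+1} = \epsilon(\xi + v_+) \bar q_n^2/(1 - \bar u_n) - \partial_\xi v_n$. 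Multiplying by $U_n = U_n^{(0)}(1 - \bar u_n)$ produces the required equality exactly.

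Equation (\ref{eq:54}) is then obtained by the mirror argument: set $W_n^{(0)} := \tau_{n-1}/\tau_n$, note $W_n = W_n^{(0)}(1 + \bar w_n)$ and $\partial_{t_1}\ln W_n = -v_-$ from (\ref{eq:40}), use (\ref{eq:31}) and (\ref{eq:33}) together with the Lemma~1 identity $\bar q_{n-1} = \bar p_n/(1 + \bar w_n)$, and apply the recurrence (\ref{eq:43}) in the form $b_{n-2} \bar p_{n-1} = (\xi - v_-) \bar p_n - b_{n-1}(1 + \bar w_n) \bar q_n$. The reduction now leads to $(\partial_\xi v_{n-1} + \partial_\xi v_n) W_n = \epsilon(\xi - v_-) W_n^{(0)} \bar p_n^2$, and the opposite sign of $\xi$ in (\ref{eq:54}) tracks back to the fact that in (\ref{eq:43}) the coefficient $\xi - v_-$ multiplies $\bar p_n$ rather than $\bar q_n$. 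The main obstacle I anticipate is sign discipline, i.e.\ keeping track of $\epsilon$ through the TW derivative formulas and of which $Q,P$ pair is referenced by $v_{n\pm 1}$ at adjacent matrix sizes; once those conventions are fixed, the argument is a short substitution.
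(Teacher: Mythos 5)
Your proposal is correct and follows essentially the same route as the paper: it combines the endpoint recurrences (\ref{eq:42})--(\ref{eq:43}) with Lemma~1, the TW derivative formulas (\ref{eq:28})--(\ref{eq:33}) at adjacent matrix sizes, and the universal relations (\ref{eq:30}), (\ref{eq:32}), (\ref{eq:40}), differing only in that you verify the PDE by reducing it to the algebraic identity $(\partial_\xi v_{n+1}+\partial_\xi v_n)U_n=\epsilon(\xi+v_+)U_n^{(0)}\bar q_n^2$ rather than deriving (\ref{eq:46})--(\ref{eq:49}) forward. The sign bookkeeping and the identification $\partial_{t_1}\ln U_n=-v_+$ are handled exactly as in the paper.
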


\begin{proof}
\noindent Using formulas (\ref{eq:16}) and (\ref{eq:33}), one gets for $\xi = a_k$:

\begin{equation}
\frac{\prt v_{n+1}}{\prt a_k} = (-1)^k b_n\bar q_{n+1}\bar p_{n+1} = (-1)^k b_n\bar q_{n+1}\frac{\bar q_n}{1-\bar u_n},  \label{eq:44} 
\end{equation}

\begin{equation}
\frac{\prt v_{n-1}}{\prt a_k} = (-1)^k b_{n-2}\bar q_{n-1}\bar p_{n-1} = (-1)^k b_{n-2}\bar p_{n-1}\frac{\bar p_n}{1+\bar w_n},  \label{eq:45} 
\end{equation}

\noindent Substituting (\ref{eq:42}) and (\ref{eq:43}) into (\ref{eq:44}) and (\ref{eq:45}), respectively, we get

$$
\frac{\prt v_{n+1}}{\prt \xi} = -(-1)^k b_{n-1}\bar q_n\bar p_n + (\xi - a_n + v_{n+1} - v_n)(-1)^k \frac{\bar q_n^2}{1-\bar u_n} =
$$

\begin{equation}
= -\frac{\prt v_n}{\prt\xi} - (\xi - a_n + v_{n+1} - v_n)\frac{\prt\ln(1-\bar u_n)}{\prt\xi},  \label{eq:46} 
\end{equation}

\begin{equation}
\frac{\prt v_{n-1}}{\prt \xi} = -\frac{\prt v_n}{\prt\xi} + (\xi - a_{n-1} + v_n - v_{n-1})\frac{\prt\ln(1+\bar w_n)}{\prt\xi}.  \label{eq:47} 
\end{equation}

\noindent We introduce $U_n \equiv \tau_{n+1}^J/\tau_n^J$ and $W_n \equiv \tau_{n-1}^J/\tau_n^J$. Then, with the help of relations (\ref{eq:30}), (\ref{eq:32}) from previous section, (\ref{eq:46}) and (\ref{eq:47}), respectively, can be written as (recall that $v_+ = v_{n+1}-v_n-a_n$, $v_- = v_{n-1}-v_n+a_{n-1}$)

\begin{equation}
\frac{\prt(U_nv_+)}{\prt\xi} = -\xi \frac{\prt U_n}{\prt\xi} - 2\frac{\prt v_n}{\prt\xi}U_n,  \label{eq:48} 
\end{equation}

\begin{equation}
\frac{\prt(W_nv_-)}{\prt\xi} = \xi \frac{\prt W_n}{\prt\xi} - 2\frac{\prt v_n}{\prt\xi}W_n.  \label{eq:49} 
\end{equation}

\noindent Since, by (\ref{eq:41}), $v_+ = -\prt_{t1}\ln U_n$, $v_- = -\prt_{t1}\ln W_n$, the last two equations are equivalent to \ref{eq:53} and \ref{eq:54}.
\end{proof}

\par Thus, on the one hand, there always is a universal system of PDE valid for any UE and very similar in structure to the system derived in Ref.~\cite{IR1} for the Gaussian case using the first equations of 1-Toda lattice hierarchy~\cite{AvM2, AvM7, PvM2007} -- the coupled Toda-AKNS system (see, e.g., Ref.~\cite{Newell}). It consists of the Toda equation itself: 

$$
\frac{\prt^2 \ln\tau_n^J}{\prt t_1^2} = -\frac{\prt v_n}{\prt t_1} + \frac{\prt^2 \ln\tau_n}{\prt t_1^2} = U_nW_n,  \eqno(\ref{eq:52})  
$$

\noindent and the analog of AKNS derived above -- the equations (\ref{eq:53}), (\ref{eq:54}). 
\par However, another important universal equation follows from matching with TW approach, if we recall formula$^{29}$ for the diagonal element of the resolvent kernel taken at an endpoint $\xi$, 

\begin{equation}
R_n^2(\xi,\xi) = p_n\frac{\prt q_n}{\prt \xi} - q_n\frac{\prt p_n}{\prt \xi}.  \label{eq:50} 
\end{equation}

\begin{theorem}

With respect to any particular endpoint $\xi$ of the spectrum for a finite size UE, the following universal PDE holds for matrix integrals -- $\tau$-functions:

$$
\left(\frac{\prt\ln\tau_n^J}{\prt\xi}\right)^2 = R_n^2(\xi,\xi) = -\frac{1}{4}\frac{\prt U_n}{\prt\xi}\frac{\prt W_n}{\prt \xi}\left(\frac{\prt}{\prt \xi}\ln\left(-\frac{\prt U_n/\prt \xi}{\prt W_n/\prt \xi}\right)\right)^2.  \eqno(\ref{eq:51})  
$$

\end{theorem}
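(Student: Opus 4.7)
The plan is to combine the Wronskian formula (\ref{eq:50}) for the diagonal resolvent kernel with the TW endpoint derivative formulas (\ref{eq:28})--(\ref{eq:33}) and the $\tau$-ratio identifications from Theorem 3, working throughout in the ``bar'' variables $\bar u_n,\bar w_n,\bar q_n,\bar p_n$ and converting to $U_n,W_n$ only at the end. The first equality in (\ref{eq:51}) is immediate from squaring (\ref{eq:25}); all the content is in the second equality.

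I would start by rewriting (\ref{eq:50}) via $q_n=\sqrt{b_{n-1}}\,\bar q_n$ and $p_n=\sqrt{b_{n-1}}\,\bar p_n$ as $R_n(\xi,\xi) = b_{n-1}(\bar p_n\,\prt_\xi\bar q_n - \bar q_n\,\prt_\xi\bar p_n)$, then multiply through by $\bar q_n\bar p_n$ to turn this Wronskian into a combination of derivatives of squares:
\begin{equation*}
R_n(\xi,\xi)\cdot \bar q_n\bar p_n = \tfrac{b_{n-1}}{2}\bigl(\bar p_n^{\,2}\,\prt_\xi\bar q_n^{\,2} - \bar q_n^{\,2}\,\prt_\xi\bar p_n^{\,2}\bigr).
\end{equation*}
Invoking (\ref{eq:28}) and (\ref{eq:31}) in the form $\bar q_n^{\,2} = \epsilon\,\prt_\xi\bar u_n$ and $\bar p_n^{\,2} = \epsilon\,\prt_\xi\bar w_n$ with $\epsilon=(-1)^j$, and squaring to kill the remaining lone factor $\bar q_n\bar p_n$ via $(\bar q_n\bar p_n)^2 = \prt_\xi\bar u_n\cdot\prt_\xi\bar w_n$, I expect to reach the intermediate identity
\begin{equation*}
R_n^2(\xi,\xi) = \frac{b_{n-1}^2}{4}\cdot\frac{\bigl(\prt_\xi\bar w_n\,\prt_\xi^2\bar u_n - \prt_\xi\bar u_n\,\prt_\xi^2\bar w_n\bigr)^2}{\prt_\xi\bar u_n\cdot\prt_\xi\bar w_n}.
\end{equation*}
The squaring is essential: it both kills the sign $\epsilon$ (so the identity is manifestly endpoint-independent) and obviates any need to track branches of $\bar q_n\bar p_n = \epsilon\sqrt{\prt_\xi\bar u_n\,\prt_\xi\bar w_n}$.

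The conversion to $U_n,W_n$ then uses Theorem 3 together with the $\xi$-independence of $\tau_n$ and $\tau_{n\pm 1}$: (\ref{eq:30}) and (\ref{eq:32}) give $\prt_\xi\bar u_n = -(\tau_n/\tau_{n+1})\,\prt_\xi U_n$ and $\prt_\xi\bar w_n = (\tau_n/\tau_{n-1})\,\prt_\xi W_n$; combined with $b_{n-1}^2 = \tau_{n+1}\tau_{n-1}/\tau_n^2$ from (\ref{eq:37}), this yields $\prt_\xi U_n\,\prt_\xi W_n = -b_{n-1}^2\,\prt_\xi\bar u_n\,\prt_\xi\bar w_n$, while $\xi$-independence of the ratio $\tau_{n-1}/\tau_{n+1}$ gives
\begin{equation*}
\prt_\xi\ln\!\left(-\frac{\prt_\xi U_n}{\prt_\xi W_n}\right) = \prt_\xi\ln\!\left(\frac{\prt_\xi\bar u_n}{\prt_\xi\bar w_n}\right) = \frac{\prt_\xi^2\bar u_n\,\prt_\xi\bar w_n - \prt_\xi^2\bar w_n\,\prt_\xi\bar u_n}{\prt_\xi\bar u_n\cdot\prt_\xi\bar w_n}.
\end{equation*}
Plugging both into the RHS of (\ref{eq:51}) should reproduce the intermediate identity term by term; the minus sign inside the logarithm is forced by the opposite signs of the two Jacobian factors. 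The main obstacle is purely bookkeeping --- tracking $\epsilon=(-1)^j$ through the TW endpoint equations long enough to see it square away, and confirming that the $\tau_n/\tau_{n\pm 1}$ prefactors cancel cleanly inside the log-derivative --- but no genuinely analytic difficulty seems to arise once Theorem 3 and the Wronskian identity (\ref{eq:50}) are in hand.
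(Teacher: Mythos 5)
Your proof is correct and follows essentially the same route as the paper's: square (\ref{eq:25}), express the Wronskian in (\ref{eq:50}) through the TW endpoint formulas (\ref{eq:28}), (\ref{eq:31}) by passing to $\bar q_n^2,\bar p_n^2$ and squaring away the sign $(-1)^j$ and the lone factor $\bar q_n\bar p_n$, and then convert to $\tau$-ratios via (\ref{eq:30}), (\ref{eq:32}) and (\ref{eq:37}). You have (rightly) used (\ref{eq:50}) in the form $R_n(\xi,\xi)=p_n\,\partial_\xi q_n-q_n\,\partial_\xi p_n$; the exponent $2$ printed on its left-hand side is a typo, as consistency of (\ref{eq:51}) with (\ref{eq:25}), (\ref{eq:55}) and (\ref{eq:57}) confirms.
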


\begin{proof}

If we substitute (\ref{eq:50}) into the square of (\ref{eq:25}), use TW formulas (\ref{eq:28}), (\ref{eq:31}), and then formulas (\ref{eq:30}) and (\ref{eq:32}) from previous section, we arrive at equation (\ref{eq:51}).

\end{proof}

This will make it possible to derive something even more interesting -- a universal PDE for only one dependent variable, the logarithm of the $\tau$-function $\tau_n^J$. 
From the TW equations alone one can immediately get the simple bilinear relation:

\begin{equation}
\left(\frac{\prt v_n}{\prt \xi}\right)^2 = \frac{\prt u_n}{\prt \xi} \frac{\prt w_n}{\prt \xi}  \label{eq:55a}  
\end{equation}

\noindent From now on we will denote $T = \ln\tau_n^J$ and write $U$, $W$ instead of $U_n$, $W_n$ and just $t$ instead of $t_1$; also we will write just $f_{\xi}$ for $\prt f/\prt\xi$, $f_t$ for $\prt f/\prt t$ etc. Using the universal relations (\ref{eq:30}), (\ref{eq:32}) and (\ref{eq:41}) with $\tau$-functions, (\ref{eq:55a}) can be rewritten as  

$$
\left(v_{\xi}\right)^2 = \left(T_{\xi t}\right)^2 = -U_{\xi}W_{\xi}.  \eqno(\ref{eq:55})
$$

\noindent The last equation is also a consequence of the above (\ref{eq:53}), (\ref{eq:54}). Thus, (\ref{eq:53}) and (\ref{eq:54}) can be alternatively rewritten as a different pair of equations -- the  equation (\ref{eq:55}) and their other independent combination:

$$
WU_{\xi t} - UW_{\xi t} = \xi\prt_{\xi}(UW) = \xi T_{\xi tt}. \eqno(\ref{eq:56})  
$$

\noindent The equation (\ref{eq:51}) also can be transformed with the help of universal relations into another form:

$$
\prt_t(T_{\xi})^2 = U_{\xi}W_{\xi\xi} - W_{\xi}U_{\xi\xi}, \eqno(\ref{eq:57})  
$$

\noindent which will be used in what follows. It is convenient to introduce functions $G$ and $H$,

\begin{equation}
G = WU_{\xi} - UW_{\xi}, \label{eq:58}  
\end{equation}

\begin{equation}
H = WU_t - UW_t. \label{eq:59} 
\end{equation}

\noindent We now proceed to prove

\begin{theorem}
The logarithm of gap probability $\Pr$ ($T = \ln\tau_n^J = \ln\Pr + \ln\tau_n$) for random matrix unitary ensembles (UE) satisfies a {\it universal PDE}:

$$
\left(T_{\xi t} T_{\xi\xi tt} - T_{\xi tt}T_{\xi\xi t} + 2(T_{\xi t})^3\right)^2 = (T_{\xi})^2\left(4T_{tt}(T_{\xi t})^2 + (T_{\xi tt})^2\right).  \eqno(\ref{eq:T}) 
$$

\end{theorem}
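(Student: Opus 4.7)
The plan is to eliminate $U = U_n$ and $W = W_n$ from the three universal equations already established — the Toda equation (\ref{eq:52}), $T_{tt} = UW$; the bilinear relation (\ref{eq:55}), $(T_{\xi t})^2 = -U_\xi W_\xi$; and (\ref{eq:57}), i.e., $2T_\xi T_{\xi t} = U_\xi W_{\xi\xi} - W_\xi U_{\xi\xi}$ — so as to obtain a single PDE for $T$ alone. The key auxiliary object is $G := W U_\xi - U W_\xi$. Differentiating (\ref{eq:52}) once in $\xi$ yields $U_\xi W + U W_\xi = T_{\xi tt}$, and combining this with (\ref{eq:52}) and (\ref{eq:55}) via the elementary identity $(x-y)^2 = (x+y)^2 - 4xy$ applied to $x = U_\xi W$, $y = U W_\xi$ gives
\begin{equation*}
G^2 \;=\; T_{\xi tt}^2 - 4(UW)(U_\xi W_\xi) \;=\; T_{\xi tt}^2 + 4\, T_{tt}\,(T_{\xi t})^2,
\end{equation*}
which is exactly the right-hand side of (\ref{eq:T}) divided by $T_\xi^2$. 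So the proof reduces to establishing the unsquared identity
\begin{equation*}
T_\xi\, G \;=\; T_{\xi t}\,T_{\xi\xi tt} - T_{\xi tt}\,T_{\xi\xi t} + 2(T_{\xi t})^3.
\end{equation*}

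For this I would first extract the logarithmic derivatives $\partial_\xi\ln U_\xi$ and $\partial_\xi\ln W_\xi$ individually. Differentiating (\ref{eq:55}) once in $\xi$ gives $-2 T_{\xi t} T_{\xi\xi t} = U_{\xi\xi} W_\xi + U_\xi W_{\xi\xi}$; adding and subtracting this with (\ref{eq:57}) and dividing through by $U_\xi W_\xi = -(T_{\xi t})^2$ yields the clean decoupled formulas
\begin{equation*}
\frac{U_{\xi\xi}}{U_\xi} \;=\; \frac{T_{\xi\xi t} + T_\xi}{T_{\xi t}}, \qquad \frac{W_{\xi\xi}}{W_\xi} \;=\; \frac{T_{\xi\xi t} - T_\xi}{T_{\xi t}}.
\end{equation*}
Differentiating $T_{\xi tt} = U_\xi W + U W_\xi$ once more in $\xi$ produces $T_{\xi\xi tt} = U_{\xi\xi} W + 2 U_\xi W_\xi + U W_{\xi\xi}$. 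Writing $U_{\xi\xi} W + U W_{\xi\xi} = (U_\xi W)(U_{\xi\xi}/U_\xi) + (UW_\xi)(W_{\xi\xi}/W_\xi)$ and substituting the above, the coefficient $T_{\xi\xi t}/T_{\xi t}$ groups with the symmetric combination $U_\xi W + UW_\xi = T_{\xi tt}$, while the coefficient $T_\xi/T_{\xi t}$ groups with the antisymmetric combination $U_\xi W - UW_\xi = G$. Using also $2 U_\xi W_\xi = -2(T_{\xi t})^2$ from (\ref{eq:55}), one reads off exactly the displayed formula for $T_\xi G$; squaring it and inserting $G^2$ from the first step gives (\ref{eq:T}).

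I do not anticipate a genuine obstacle, since the derivation is purely algebraic once the three earlier identities are in hand. The one moment of insight worth emphasizing is that (\ref{eq:55}) and (\ref{eq:57}) together, but neither alone, decouple $U_{\xi\xi}/U_\xi$ from $W_{\xi\xi}/W_\xi$ as explicit rational expressions in derivatives of $T$. Without this decoupling the combination $U_{\xi\xi}W + UW_{\xi\xi}$ cannot be reorganized as a linear combination of the only two bilinears of $U$, $W$, $U_\xi$, $W_\xi$ already computable from $T$, namely $U_\xi W + UW_\xi$ and $U_\xi W - UW_\xi$, and the residual dependence on $U$ and $W$ separately refuses to disappear.
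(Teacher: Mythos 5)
Your proposal is correct, and it follows the paper's overall strategy: the same auxiliary quantity $G = WU_{\xi}-UW_{\xi}$, the same two key identities $G^2 = 4T_{tt}(T_{\xi t})^2+(T_{\xi tt})^2$ and $T_{\xi}G = T_{\xi t}T_{\xi\xi tt}-T_{\xi tt}T_{\xi\xi t}+2(T_{\xi t})^3$, and the same final substitution. Where you genuinely diverge is in how you reach the second identity. The paper never isolates $U_{\xi\xi}/U_{\xi}$ and $W_{\xi\xi}/W_{\xi}$; instead it records $2WU_{\xi}=T_{\xi tt}+G$, $2UW_{\xi}=T_{\xi tt}-G$ and their $\xi$-derivatives $2WU_{\xi\xi}=2(T_{\xi t})^2+T_{\xi\xi tt}+G_{\xi}$, $2UW_{\xi\xi}=2(T_{\xi t})^2+T_{\xi\xi tt}-G_{\xi}$, forms the cross combination to get $T_{\xi tt}G_{\xi}-(2(T_{\xi t})^2+T_{\xi\xi tt})G=-2T_{tt}\,\prt_t(T_{\xi})^2$, and then multiplies by $G$ and invokes both $G^2$ and $GG_{\xi}$ so that the unwanted terms cancel. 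You instead combine the $\xi$-derivative of (\ref{eq:55}) with (\ref{eq:57}) to decouple
\begin{equation*}
\frac{U_{\xi\xi}}{U_{\xi}}=\frac{T_{\xi\xi t}+T_{\xi}}{T_{\xi t}},\qquad \frac{W_{\xi\xi}}{W_{\xi}}=\frac{T_{\xi\xi t}-T_{\xi}}{T_{\xi t}},
\end{equation*}
and then read the identity for $T_{\xi}G$ directly off the second $\xi$-derivative of the Toda equation. This is shorter: it bypasses $G_{\xi}$ and the cancellation step entirely, and it makes transparent \emph{why} the left-hand side of (\ref{eq:T}) is a perfect square of a polynomial in derivatives of $T$ (because $U_{\xi\xi}W+UW_{\xi\xi}$ reorganizes into the two computable bilinears $T_{\xi tt}$ and $G$). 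What the paper's longer route buys is the intermediate relations (\ref{eq:63})--(\ref{eq:66}), which it reuses afterwards to write the explicit recursion formulas $2\prt_{\xi}\ln U=(T_{\xi tt}+G)/T_{tt}$, $2\prt_{\xi}\ln W=(T_{\xi tt}-G)/T_{tt}$ for the $\tau$-ratios. Your argument is complete as a proof of (\ref{eq:T}) itself.
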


\noindent It would be interesting to find out whether this equation appeared before somewhere in the studies of nonlinear integrable PDE. So far the author is not aware of this.

\begin{proof}

In terms of $G$ (\ref{eq:55}) takes the form (using the Toda equation to eliminate $U$ and $W$):

\begin{equation}
G^2 = 4T_{tt}(T_{\xi t})^2 + (T_{\xi tt})^2. \label{eq:61} 
\end{equation}

\par Now turn to the equation (\ref{eq:57}). One can express the combinations of derivatives of $U$ and $W$ in terms of $T$ and $G$. Using the $\xi$-derivative of equation (\ref{eq:52}) and the definition (\ref{eq:58}), we get

\begin{equation}
2WU_{\xi} = T_{\xi tt} + G,  \label{eq:63}  
\end{equation}

\begin{equation}
2UW_{\xi} = T_{\xi tt} - G.  \label{eq:64}  
\end{equation}

\noindent Differentiating (\ref{eq:63}) and (\ref{eq:64}) w.r.t. $\xi$ and using (\ref{eq:55}) gives, respectively, 

\begin{equation}
2WU_{\xi\xi} = 2(T_{\xi t})^2 + T_{\xi\xi tt} + G_{\xi}, \label{eq:65}  
\end{equation}

\begin{equation}
2UW_{\xi\xi} = 2(T_{\xi t})^2 + T_{\xi\xi tt} - G_{\xi}.  \label{eq:66} 
\end{equation}

\noindent We now multiply (\ref{eq:64}) by (\ref{eq:65}) and subtract the product of (\ref{eq:63}) and (\ref{eq:66}) to get

$$
2UW(W_{\xi}U_{\xi\xi} - U_{\xi}W_{\xi\xi}) = T_{\xi tt}G_{\xi} - (2(T_{\xi t})^2 + T_{\xi\xi tt})G,
$$

\noindent or, using (\ref{eq:52}) and (\ref{eq:57}),

\begin{equation}
T_{\xi tt}G_{\xi} - (2(T_{\xi t})^2 + T_{\xi\xi tt})G = -2T_{tt}\prt_t(T_{\xi})^2.  \label{eq:67} 
\end{equation}

\noindent Differentiating (\ref{eq:61}) w.r.t. $\xi$ one obtains

\begin{equation}
GG_{\xi} = T_{\xi tt}(2(T_{\xi t})^2 + T_{\xi\xi tt}) + 4T_{tt}T_{\xi t}T_{\xi\xi t}.  \label{eq:68} 
\end{equation}


\noindent After multiplying (\ref{eq:67}) by $G$ and applying expressions (\ref{eq:61}) for $G^2$ and (\ref{eq:68}) for $GG_{\xi}$ on the left-hand side some of the terms and common factors cancel out and finally we arrive at

\begin{equation}
T_{\xi}G = T_{\xi t} T_{\xi\xi tt} - T_{\xi tt}T_{\xi\xi t} + 2(T_{\xi t})^3.  \label{eq:69} 
\end{equation}

Thus, we have two independent expressions for the function $G$ in terms of the derivatives of $T$ -- equations (\ref{eq:69}) and (\ref{eq:61}). Substituting G from (\ref{eq:69}) into (\ref{eq:61}) we obtain (\ref{eq:T}). 
\end{proof}

\par Equation (\ref{eq:56}) was not used to obtain (\ref{eq:T}). It contains the $\xi$-variable in the coefficient, while (\ref{eq:T}) has the feature that its coefficients are constant. It turns out that the additional information from (\ref{eq:56}) can be used to obtain another universal PDE, this time for the logarithm of the ratio $T_+ \equiv \ln(\tau_{n+1}^J/\tau_n^J)$ (now `prime' stands for the derivative w.r.t. $\xi$):

\begin{theorem}

The logarithm of the ratio of matrix integrals $T_+ \equiv \ln(\tau_{n+1}^J/\tau_n^J)$ for random matrix unitary ensembles (UE) satisfies a {\it universal PDE}:

$$
(T_+'')_{tt} - \frac{(T_+')_{tt}T_+''}{T_+'} - (T_+')^2(T_+)_{tt} - \frac{(T_+')_{t}(T_+'')_t}{T_+'} + \frac{(T_+')_{t}^2T_+''}{(T_+')^2} + (\xi - (T_+)_t)T_+'(2(T_+')_t - 1) = 0.    \eqno(\ref{eq:T+}) 
$$

\end{theorem}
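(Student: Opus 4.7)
The strategy is to eliminate $T = \ln\tau_n^J$ and $W$ from the system (\ref{eq:52}), (\ref{eq:55}), (\ref{eq:56}) and arrive at a single PDE in $T_+ = \ln U$. I substitute $U = e^{T_+}$ throughout and use the Toda equation (\ref{eq:52}) in the form $W = T_{tt} e^{-T_+}$. Plugging these into (\ref{eq:55}), with $U_\xi = T_+' e^{T_+}$ and $W_\xi = e^{-T_+}(T_{\xi tt} - T_+' T_{tt})$, gives the identity
\[ (T_{\xi t})^2 = (T_+')^2 T_{tt} - T_+' T_{\xi tt}, \qquad (\textrm{A}) \]
which algebraically determines $T_{tt}$ once $T_{\xi t}$ and $T_{\xi tt}$ are known in $T_+$ variables.

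Next I extract a formula for $T_{\xi t}$ itself using (\ref{eq:56}). Computing $WU_{\xi t}$ and $UW_{\xi t}$ from the same substitutions and simplifying converts (\ref{eq:56}) into
\[ T_+' T_{ttt} - T_{\xi ttt} = (\xi - (T_+)_t) T_{\xi tt} - 2 (T_+')_t T_{tt}. \]
Differentiating (A) in $t$ yields a second expression for $T_+' T_{ttt} - T_{\xi ttt}$; matching the two and eliminating $T_{tt}$ via (A) again produces, after cancelling a common factor of $T_{\xi tt}/T_+'$, the key relation
\[ 2 T_{\xi t} = T_+'(\xi - (T_+)_t) - (T_+')_t, \qquad (\textrm{B}) \]
and its $t$-derivative
\[ 2 T_{\xi tt} = (T_+')_t(\xi - (T_+)_t) - T_+'(T_+)_{tt} - (T_+')_{tt}. \qquad (\textrm{C}) \]

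Substituting (B) and (C) back into (A) gives a closed formula for $T_{tt}$ in terms of $T_+$ and its derivatives,
\[ T_{tt} = \frac{(\xi - (T_+)_t)^2}{4} + \frac{(T_+')_t^2}{4 (T_+')^2} - \frac{(T_+)_{tt}}{2} - \frac{(T_+')_{tt}}{2 T_+'}. \]
The PDE then arises from the compatibility requirement $\partial_\xi T_{tt} = T_{\xi tt}$, with the right-hand side supplied independently by (C). Differentiating the displayed formula term by term in $\xi$, multiplying through by $T_+'$ to clear denominators, and grouping the $\xi$-linear pieces into $(\xi - (T_+)_t) T_+' (2 (T_+')_t - 1)$ yields (\ref{eq:T+}) after the remaining terms cancel.

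The main obstacle lies in the bookkeeping in that final step: the $\xi$-derivatives of the compound expressions $(T_+')_t^2/(T_+')^2$ and $(T_+')_{tt}/T_+'$ produce several fractional terms that must precisely match those appearing in (\ref{eq:T+}). Conceptually, however, the key observation is that (\ref{eq:56}), which was unused in the derivation of (\ref{eq:T}), contributes just the extra information needed to fix $T_{\xi t}$ via (B); once this is known, the system closes and forces a single PDE for $T_+$ alone.
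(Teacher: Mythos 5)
Your proposal is correct and follows essentially the same route as the paper: both reduce the system to the relation $T_{\xi tt} = T_{tt}T_+' - T_{\xi t}^2/T_+'$, obtain the linear formula $2T_{\xi t} = (\xi-(T_+)_t)T_+' - (T_+')_t$ and its $t$-derivative, solve for $T_{tt}$ in terms of $T_+$ alone, and impose the compatibility condition $\prt_\xi T_{tt} = T_{\xi tt}$. The only divergence is that the paper reads your relation (B) directly off equation (\ref{eq:53}) (using $v_\xi = -T_{\xi t}$), whereas you recover it from (\ref{eq:56}) together with $\prt_t$ of (A) --- a valid but more roundabout derivation, equivalent because the pair (\ref{eq:55}), (\ref{eq:56}) carries the same information as (\ref{eq:53}), (\ref{eq:54}), though it additionally requires $T_{\xi tt}\neq 0$ to cancel the common factor.
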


\begin{proof}
It is a tedious calculation. First one substitutes $W$ expressed from Toda equation (\ref{eq:52}) and $W_{\xi}$ from (\ref{eq:55}) into the $\xi$-derivative of (\ref{eq:52}). This gives

$$
T_{\xi tt} = T_{tt}T_+' - \frac{T_{\xi t}^2}{T_+'}.
$$

\noindent Besides, we use (\ref{eq:53}), writing it as

$$
2T_{\xi t} = -(T_+')_t + (\xi - (T_+)_t)T_+',
$$

\noindent and its $t$-derivative,

$$
2T_{\xi tt} = -(T_+')_{tt} + (\xi - (T_+)_t)(T_+')_t - T_+'(T_+)_{tt}.
$$

\noindent We substitute the last two equations into the previous one, divide it by $T_+'/4$ and get:

$$
4T_{tt} = -2\frac{(T_+')_{tt}}{T_+'} - 2(T_+)_{tt} + \left(\frac{(T_+')_t}{T_+'}\right)^2 + (\xi - (T_+)_t)^2.
$$

\noindent We differentiate the last equation once again w.r.t. $\xi$ and compare the two expressions for $T_{\xi tt}$ in terms of derivatives of $T_+$. This gives (\ref{eq:T+}).
\end{proof}  

\par In addition to the main results -- universal equations (\ref{eq:57})--(\ref{eq:56}) and especially (\ref{eq:T}) and (\ref{eq:T+}), it seems worth to reformulate the preceding findings in this section. In terms of $G$ and $H$ and the main function $T$ (\ref{eq:56}) acquires a nice linear form:

\begin{equation}
G_t + H_{\xi} = 2\xi T_{\xi tt}. \label{eq:60} 
\end{equation}

\noindent Besides, if we use the three equations (\ref{eq:52})--(\ref{eq:54}), eliminating the mixed derivatives $U_{\xi t}$ and $W_{\xi t}$ by substituting their expressions from (\ref{eq:53}) and (\ref{eq:54}), respectively, into (\ref{eq:52}), differentiated twice, once w.r.t. $\xi$ and once w.r.t. $t$, we get (using also $v_{\xi} = -T_{\xi t}$)

$$
T_{\xi ttt} = \xi G + U_{\xi}W_t + W_{\xi}U_t  - 4T_{tt}T_{\xi t}.
$$

\noindent This in turn can be rewritten as an expression of $H$ in terms of $G$ and derivatives of $T$:

\begin{equation}
H = 2\xi T_{tt} - \frac{2T_{tt}T_{\xi ttt} - T_{ttt}T_{\xi tt} - 8(T_{tt})^2T_{\xi t}}{G}.  \label{eq:62} 
\end{equation}

\noindent We thus have general universal expressions for logarithmic derivatives of Toda $\tau$-ratios $\tau_{n \pm 1}/\tau_n$ in terms of derivatives of the main ($n$-level) $\tau$-function (due to explicit expressions (\ref{eq:61}), (\ref{eq:69}) for $G$ and (\ref{eq:62}) for $H$):

$$
2\prt_{\xi}\ln U = \frac{T_{\xi tt} + G}{T_{tt}}, \hspace{2cm} 2\prt_{\xi}\ln W = \frac{T_{\xi tt} - G}{T_{tt}},
$$

$$
2\prt_{t}\ln U = \frac{T_{ttt} + H}{T_{tt}}, \hspace{2cm} 2\prt_{t}\ln W = \frac{T_{ttt} - H}{T_{tt}}.
$$

Therefore we can write out two pairs of general recursion relations for the derivatives of UE $\tau$-functions (restoring the subscript $n$ for the matrix size)

$$
2\prt_{\xi}T_{n+1} = 2\prt_{\xi}T_n + \prt_{\xi}\ln \prt^2_{tt}T_n + \frac{G_n}{\prt^2_{tt}T_n},
$$

$$
2\prt_{\xi}T_n = 2\prt_{\xi}T_{n+1} + \prt_{\xi}\ln \prt^2_{tt}T_{n+1} - \frac{G_{n+1}}{\prt^2_{tt}T_{n+1}},
$$

\noindent and

$$
2\prt_{t}T_{n+1} = 2\prt_{t}T_n + \prt_{t}\ln \prt^2_{tt}T_n + \frac{H_n}{\prt^2_{tt}T_n},
$$

$$
2\prt_{t}T_n = 2\prt_{t}T_{n+1} + \prt_{t}\ln \prt^2_{tt}T_{n+1} - \frac{H_{n+1}}{\prt^2_{tt}T_{n+1}},
$$

\noindent where $G_n \equiv G$, $H_n \equiv H$ at level (matrix size) $n$, and $G_{n+1}$, $H_{n+1}$ are given by the same formulas (\ref{eq:69}), (\ref{eq:62}), only in terms of $\ln\tau_{n+1}^J$ rather than $\ln\tau_n^J$ . The first equations in the pairs follow from previous formulas for $U$, while the second ones -- from the companion formulas for $W$ with the shift $n-1 \to n$.

\section*{\normalsize\bf VI. ISOMONODROMIC DEFORMATIONS, SCHLESINGER EQUATIONS AND TODA-AKNS STRUCTURE} 

The direct connection of TW equations for random matrix UE with general Schlesinger equations -- the compatibility conditions for systems of linear ordinary differential equations (ODE) related to their isomonodromic deformations -- was first pointed out in the Ref.~\cite{HarTW}. There the matrices of the quadratic combinations of TW variables $Q(a_k)$, $P(a_k)$ ($a_k$ is the $k$-th endpoint of the spectrum) were introduced,

\begin{equation}
A_k = (-1)^{k-1}\left(\begin{array}{cc} Q(a_k)P(a_k) & -Q(a_k)^2 \\ P(a_k)^2 & -Q(a_k)P(a_k) \end{array}\right), \label{eq:72} 
\end{equation}

\noindent and shown to satisfy the Schlesinger equations. A little later an important work by Palmer~\cite{Pal} appeared, where the isomonodromic deformations for UE were studied in details. Palmer considered the corresponding ``Cauchy-Riemann operator" $\prt_z$ problem similar to a Riemann-Hilbert problem (RHP) (for different or more general considerations of RHP in the context of matrix UE see e.g. an excellent book~\cite{Dei} as well as the original papers~\cite{FIK1, FIK2, DeiItsZhu, BorDei, DKLVZ}). He studied the deformation of a fundamental matrix $F(z)$,

$$
F(z) = \left(\begin{array}{cc} \varphi & \tilde\varphi \\ \psi & \tilde\psi \end{array}\right),
$$

\noindent of solutions to $2\times 2$ linear system of ODE -- called ``differentiation formulas" in Ref.~\cite{TW1} (for another, more recent, study of general matrix UE based on deformations of such differential systems see Ref.~\cite{BeEyHa-06}):

$$
\frac{d}{dz}\left(\begin{array}{cc} \varphi \\ \psi \end{array}\right) = M(z)\left(\begin{array}{cc} \varphi \\ \psi \end{array}\right),
$$

\begin{equation}
M(z) = \frac{1}{m(z)}\left(\begin{array}{cc} A(z) & B(z) \\ C(z) & -A(z) \end{array}\right),  \label{eq:73} 
\end{equation}

\noindent where $m(z)$ is a polynomial in $z$, the functions $A(z)$, $B(z)$ and $C(z)$ are determined~\cite{Bauldry, TW1}, by the orthogonal functions $\varphi$, $\psi$ (see (\ref{eq:2})) and the corresponding potential $V(x)$:

$$
A(z) = -\int \varphi(y)\psi(y)\cdot \frac{V'(z)-V'(y)}{z-y}dy - \frac{V'(z)}{2},
$$

$$
B(z) = \int \varphi^2(y)\cdot \frac{V'(z)-V'(y)}{z-y}dy,
$$

$$
C(z) = \int \psi^2(y)\cdot \frac{V'(z)-V'(y)}{z-y}dy .
$$

The matrix $F(z)$, whose determinant must be a constant, is normalized by the condition $\det F(z) \equiv 1$. The deformed matrix $Y(z; a)$ is uniquely determined by analyticity requirement outside the real axis, the asymptotics near infinity, $Y(z; a) \sim F(z), z\to\infty$, and matching conditions on a finite number of cuts $[a_j, a_{j+1}]$ on $\Re$:

\begin{equation}
Y_+(z) = Y_-(z)\cdot(I - \chi_{[a_j,a_{j+1}]} + \chi_{[a_j,a_{j+1}]}\Theta),\ \ \ \Theta =  \left(\begin{array}{cc} 1 & 2\pi i\lambda \\ 0 & 1 \end{array}\right),   \label{eq:74} 
\end{equation}

\noindent where $Y_+(z)$ and $Y_-(z)$ are analytic in the upper and lower half-plane, respectively, and define $Y(z)$ in their domains of analyticity. As shown in Ref.~\cite{Pal}, the solution to the above problem can be expressed by the formula

\begin{equation}
Y = \left(\begin{array}{cc} Q & \tilde\varphi - \lambda\tilde KQ \\ P & \tilde\psi - \lambda\tilde KP \end{array}\right),  \label{eq:75} 
\end{equation}

\noindent where $Q$ and $P$ are the functions in (\ref{eq:4}), evaluated for the rescaled operator $\lambda K^J$, i.e. $Q = (I-\lambda K^J)^{-1}\varphi$, $P = (I-\lambda K^J)^{-1}\psi$, and $\tilde K$ is the operator with {\it singular kernel} (i.e. turning to $\infty$ as $x \to y$) 

$$
\tilde K(x,y) = \frac{\tilde\psi(x)\varphi(y) - \tilde\varphi(x)\psi(y)}{x-y}\chi_{J}(y).
$$

\noindent The normalization $\det F = 1$ implies, by analyticity of $\det Y(z)$ and their asymptotic matching at infinity, the corresponding condition for $Y$, $\det Y(z; a) \equiv 1$. This leads to a nontrivilal relation for the elements of $Y$:

\begin{equation}
\det Y = \tilde\psi Q - \tilde\varphi P + \lambda(P\cdot\tilde K Q - Q\cdot\tilde K P) = 1.  \label{eq:76} 
\end{equation}

Then, following Palmer, introduce a matrix function $\Delta(z)$ such that, by definition,

\begin{equation}
F(z)Y_+^{-1}(z) = I + \Delta(z).  \label{eq:77} 
\end{equation}
 
\noindent Differentiation of (\ref{eq:77}) w.r.t. $z$ combined again with analyticity considerations then yields the following ODE satisfied by $Y$:

\begin{equation}
\frac{dY}{dz}Y^{-1} = (I+\Delta)^{-1}M(I+\Delta) - (I+\Delta)^{-1}\frac{d\Delta}{dz}, \label{eq:78} 
\end{equation}

\noindent where $M = M(z)$ is given by (\ref{eq:73}).
The matrix $\Delta$ can be explicitly expressed~\cite{IIKS, Pal} in terms of related functions $\varphi$, $\psi$, $Q$ and $P$. It follows from (\ref{eq:75}), (\ref{eq:76}) and the definitions (\ref{eq:77}) and (\ref{eq:74}), that

\begin{equation}
\Delta(z; J) = \intop \frac{\lambda\chi_J(y)}{z-y}dy\left(\begin{array}{cc} -\varphi(y)P(y) & \varphi(y)Q(y) \\ \\ -\psi(y)P(y) & \psi(y)Q(y) \end{array}\right).  \label{eq:79} 
\end{equation}

Consideration of monodromies around the ends of the branch cuts $a_j$ for $Y$ and the singular point at $\infty$ leads to the conclusion~\cite{Pal} that there exists a matrix function $\Pi(z; J)$ analytic everywhere in $z$-plane except for the point at $\infty$ and possible zeros of $m(z)$ (see (\ref{eq:73}), (\ref{eq:78})) such that

\begin{equation}
\frac{dY}{dz}Y^{-1} = \Pi(z; J) + \lambda \sum_k \frac{A_k}{z-a_k},  \label{eq:80} 
\end{equation}

\noindent with matrices $A_k$ given by (\ref{eq:72}). In the simpler case, when $m(z) \equiv 1$, the function $\Pi(z)$ is polynomial in $z$. One can see from general equations (\ref{eq:78}) and (\ref{eq:80}) that the principal part at $\infty$,

\begin{equation}
\text{p. p. }(Y'Y^{-1}) = \text{p. p. }(\Pi) = \text{p. p. }((I+\Delta)^{-1}M(I+\Delta)). \label{eq:81}  
\end{equation}

\noindent Also in general $m(z)\Pi(z)$ is a matrix polynomial having no singularities besides $\infty$ so that

\begin{equation}
\text{p. p. }(m(z)\Pi(z)) = m(z)\Pi(z).  \label{eq:82} 
\end{equation}

\noindent This in principle determines matrix $\Pi$ in general from $M$ and $\Delta$ given by (\ref{eq:73}) and (\ref{eq:79}).
\par The equations for the derivatives of $Y$ w.r.t. the endpoints $a_j$ follow also from (\ref{eq:77}):

$$
-FY^{-1}\prt_aYY^{-1} = \prt_a\Delta(z; J)
$$

\noindent or

$$
\prt_aYY^{-1} = - (I+\Delta)^{-1}\prt_a\Delta(z; J) = O(z^{-1}) \ \text{  as } z\to \infty,
$$

\noindent due to the asymptotic expansion as $z \to \infty$,

\begin{equation}
\Delta(z) = \frac{\Delta_1}{z} + \frac{\Delta_2}{z^2} + \dots   \label{eq:83} 
\end{equation}

\noindent The last equations determine that near the endpoints $a_j$

$$
\text{p. p. }(\prt_aYY^{-1}) = -\frac{\lambda A_j}{z-a_j},
$$

\noindent then by analyticity of $\prt_aYY^{-1}$ in $z$ except for the poles at $a_j$, one has

\begin{equation}
\prt_aYY^{-1} = -\frac{\lambda A_j}{z-a_j}.  \label{eq:84} 
\end{equation}

\noindent Calculating the mixed derivatives $\prt^2_{za_j}Y$ and $\prt^2_{a_ka_j}Y$ from equations (\ref{eq:80}) and (\ref{eq:84}) for different $j$ and equating the {\it singular parts} when considering various limits $z \to a_j$, gives Schlesinger equations modified~\cite{Pal} by $\Pi$, see also Ref.~\cite{BorDei}:

\begin{equation}
\frac{\prt A_j}{\prt a_j} = [\Pi(a_j; J), A_j] - \lambda\sum_{k\ne j} \frac{[A_j, A_k]}{a_j-a_k},  \label{eq:85} 
\end{equation}

\begin{equation}
\frac{\prt A_j}{\prt a_k} = \lambda\frac{[A_j, A_k]}{a_j-a_k} = \frac{\prt A_k}{\prt a_j}.  \label{eq:86}  
\end{equation}

\noindent Another nonstandard Schlesinger equation for $\Pi$ arises from the same calculation~\cite{Pal}:

\begin{equation}
\frac{\prt \Pi(z; J)}{\prt a_j} = \lambda\frac{[\Pi(z; J) - \Pi(a_j; J), A_j]}{z-a_j}.  \label{eq:87} 
\end{equation}

\noindent If we consider the limit $z \to \infty$ in the last equation as Palmer did and write $M(z) = M_0(z)(I + o(1))$ near infinity, then, using (\ref{eq:81}) and the expansion (\ref{eq:83}), we get

$$
[M_0, \frac{\prt\Delta_1}{\prt a_j} - \lambda A_j] = 0.
$$

\noindent Since matrix $M_0$ is arbitrary, a completely universal equation follows:

\begin{equation}
\frac{\prt\Delta_1}{\prt a_j} = \lambda A_j,  \label{eq:88} 
\end{equation}

\noindent with $\Delta_1$ determined by expansion of the formula (\ref{eq:79}):

\begin{equation}
\Delta_1 = \lambda\left(\begin{array}{cc} -v & u \\ -w & v \end{array}\right).   \label{eq:89} 
\end{equation}

\noindent We see that parameter $\lambda$ drops out of (88) and we obtain exactly all three universal TW equations (\ref{eq:28}), (\ref{eq:31}) and (\ref{eq:33}) from Ref.~\cite{TW1} for the ``auxiliary" variables $u$, $v$ and $w$, which now are given the special importance and meaning in terms of $\tau$-functions, see (\ref{eq:30}), (\ref{eq:32}) and (\ref{eq:41a}). There is also a general expression for logarithmic derivatives of $\tau_n^J$ w.r.t. the endpoints in terms of $A_j$ and $\Pi$ in the isomonodromic approach (e.g. in Refs.~\cite{Pal, BorDei}):

\begin{equation}
\frac{\prt \ln\tau_n^J}{\prt a_j} = \text{Tr} \left[\Pi(a_j; J)A_j + \sum_{k\neq j}\frac{A_kA_j}{a_j-a_k}\right],  \label{eq:90} 
\end{equation}

\noindent which is a consequence of existence of fundamental closed form $\Omega$, found first for the most general case of systems of linear ODE with rational coefficients in Ref.~\cite{JMU}:

\begin{equation}
\Omega = d_a \ln\tau^J = d_a \ln\det(I-K^J) = \sum_j \text{Tr} \left[\Pi(a_j; J)A_j + \sum_{k\neq j}\frac{A_kA_j}{a_j-a_k}\right]da_j.   \label{eq:91} 
\end{equation}

\par If we now sum up the equations (\ref{eq:85}) and (\ref{eq:86}) over {\it all} endpoints $a_j$ {\it and} over {\it all} matrices $A_j$ and use the equations (\ref{eq:88}), we arrive at a system of three second order PDE for three functions $u$, $v$ and $w$ (the two equations for diagonal components are the same due to the tracelessness of the matrices involved):

\begin{equation}
\B_{-1}^2 u = -\sum_j [\Pi(a_j; J), A_j]_{12},  \label{eq:92} 
\end{equation}

\begin{equation}
\B_{-1}^2 w = \sum_j [\Pi(a_j; J), A_j]_{21}, \label{eq:93} 
\end{equation}

\begin{equation}
\B_{-1}^2 v = -\sum_j [\Pi(a_j; J), A_j]_{11}.  \label{eq:94} 
\end{equation}

\noindent Rewritten in terms of $\tau$-functions with the help of the universal relations (\ref{eq:30}), (\ref{eq:32}), and (\ref{eq:41}) this becomes the universal analogue of the system of ASvM type obtained in Ref.~\cite{IR1} for the Gaussian case. Note, however, that the equation (\ref{eq:94}) becomes in fact the {\it third} order PDE after using (\ref{eq:41}). For the Gaussian case, though, it turns out to be the total derivative of the first integral obtained in Ref.~\cite{TW1}. And the first integral itself is identified by (\ref{eq:34}) with our ``boundary-Toda" equation~\cite{IR1}.
\par Let us now sum up only equations (\ref{eq:85}) and (\ref{eq:86}) with derivatives w.r.t. a {\it single particular endpoint} $a_j$ only, using again (\ref{eq:88}) and (\ref{eq:89}) so that 

\begin{equation}
\sum_k A_k = \frac{1}{\lambda}\sum_k  \frac{\prt\Delta_1}{\prt a_k} = \B_{-1}\left(\begin{array}{cc} -v & u \\ -w & v \end{array}\right).  \label{eq:95} 
\end{equation}

\noindent Again only terms involving $\Pi$ remain on the right-hand side. Expanding these commutators we write down 

\begin{prop}
The three independent components of the Schlesinger equations associated with UE, summed over all $a_k$ for every matrix $A_j$, have the form of three-term PDE, 

\begin{equation}
\frac{\prt}{\prt a_j} \B_{-1} u = (\Pi_0)_j\frac{\prt u}{\prt a_j} + 2(\Pi_+)_j\frac{\prt v}{\prt a_j},   \label{eq:96} 
\end{equation}

\begin{equation}
\frac{\prt}{\prt a_j} \B_{-1} w = -(\Pi_0)_j\frac{\prt w}{\prt a_j} + 2(\Pi_-)_j\frac{\prt v}{\prt a_j},   \label{eq:97} 
\end{equation}

\begin{equation}
\frac{\prt}{\prt a_j} \B_{-1} v = (\Pi_-)_j\frac{\prt u}{\prt a_j} + (\Pi_+)_j\frac{\prt w}{\prt a_j},   \label{eq:98} 
\end{equation}

\noindent where we denoted $\Pi_j = \Pi(a_j; J)$ and matrix elements of $\Pi$ as $\Pi_+ = \Pi_{12}$, $\Pi_- = \Pi_{21}$, $\Pi_0 = \Pi_{11} - \Pi_{22} = 2\Pi_{11}$.

\end{prop}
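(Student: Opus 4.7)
The plan is to differentiate equation (\ref{eq:95}) with respect to a single endpoint $a_j$ and then evaluate both sides by combining the Schlesinger equations (\ref{eq:85}), (\ref{eq:86}) with the universal identity (\ref{eq:88}). The key observation is that when one sums the Schlesinger system over \emph{all} $k$ for fixed $j$, the nonlinear residue contributions $[A_j,A_k]/(a_j-a_k)$ cancel pairwise by antisymmetry, so only a pure commutator with $\Pi_j$ survives.

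First I would apply $\prt/\prt a_j$ to both sides of (\ref{eq:95}). Using (\ref{eq:89}), the right-hand side becomes the $2\times 2$ matrix with entries $-\prt_{a_j}\B_{-1}v$, $\prt_{a_j}\B_{-1}u$, $-\prt_{a_j}\B_{-1}w$, $\prt_{a_j}\B_{-1}v$, which packages precisely the three unknown left-hand sides of (\ref{eq:96})--(\ref{eq:98}). For the left-hand side I would split $\sum_k \prt A_k/\prt a_j$ into the $k=j$ term, governed by (\ref{eq:85}) and equal to $[\Pi_j,A_j] - \lambda\sum_{l\neq j}[A_j,A_l]/(a_j-a_l)$, plus the $k\neq j$ terms, governed by (\ref{eq:86}) and equal to $\lambda\sum_{k\neq j}[A_k,A_j]/(a_k-a_j)$. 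The identities $[A_k,A_j]=-[A_j,A_k]$ and $(a_k-a_j)=-(a_j-a_k)$ show that these two sums are equal and so cancel, leaving the clean matrix identity $\sum_k \prt A_k/\prt a_j = [\Pi(a_j;J), A_j]$.

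Next I would use (\ref{eq:88}) backwards: $A_j = \lambda^{-1}\prt_{a_j}\Delta_1$ is itself the entry-wise $a_j$-derivative of (\ref{eq:89}), so $A_j$ has entries $-v_j, u_j, -w_j, v_j$ with $u_j = \prt u/\prt a_j$ etc. Writing $\Pi_j$ in its traceless form with entries $(\Pi_{11})_j, (\Pi_+)_j, (\Pi_-)_j, -(\Pi_{11})_j$, and using $\Pi_0 = 2\Pi_{11}$, direct expansion of the commutator $[\Pi_j, A_j]$ yields four entries; equating the $(1,2)$, $(2,1)$, and $(1,1)$ entries with those on the right-hand side reads off (\ref{eq:96}), (\ref{eq:97}), and (\ref{eq:98}) respectively, while the $(2,2)$ entry reproduces (\ref{eq:98}) by tracelessness, which is exactly why there are only three independent scalar equations. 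The main obstacle is purely bookkeeping: tracking signs from (\ref{eq:88})--(\ref{eq:89}), the factor $2$ in $\Pi_0 = \Pi_{11}-\Pi_{22} = 2\Pi_{11}$, and the antisymmetric cancellation of the commutator sums; no new structural input beyond Palmer's framework is needed.
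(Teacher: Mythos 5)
Your proposal is correct and follows exactly the route the paper intends: differentiate (\ref{eq:95}) in $a_j$, observe that the off-diagonal residue sums from (\ref{eq:85}) and (\ref{eq:86}) cancel by the symmetry $\prt A_k/\prt a_j = \prt A_j/\prt a_k$, leaving $\sum_k\prt A_k/\prt a_j=[\Pi(a_j;J),A_j]$, and then read off the matrix entries using (\ref{eq:88})--(\ref{eq:89}). The commutator bookkeeping you describe (the factor $2\Pi_{11}=\Pi_0$ in the off-diagonal entries, and the $(1,1)$/$(2,2)$ entries both giving (\ref{eq:98}) by tracelessness) checks out, so this is simply the detailed version of the paper's one-line argument.
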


\noindent The last equality is again due to $\Pi$ being traceless as a consequence of tracelessness of the original matrix $M(z)$ of differential relations. 
\par One can see a striking similarity in structure of the last equations and equations (\ref{eq:52})--(\ref{eq:54})  from previous section. Only the current equation (\ref{eq:98}) looks like a derivative of the Toda equation (\ref{eq:52}), which it is indeed in the special Gaussian case. The Gaussian case is now seen as a very special in the respect that there the general structural correspondence of the two systems holds literally term by term. This is not true for the other UE but still the Toda-AKNS-like structure and three-term character are present in both forms {\it as if such exact correspondence existed at some deeper level}. We believe this must be actually true although the real explanation is lacking yet.
\par Let us now consider further the simplest (though important) case of only one endpoint $\xi \equiv a_1$. Then $\B_{-1} = \prt_{\xi}$ and we rewrite the system (\ref{eq:98}), (\ref{eq:96}), (\ref{eq:97}) applying also the universal expressions of $u$ and $w$ in terms of $\tau$-ratios. Then, if we introduce

$$
\tilde\Pi_+ = \frac{\tau_{n+1}}{\tau_n b_{n-1}}\Pi_+, \hspace{2cm}  \tilde\Pi_- = \frac{\tau_{n-1}}{\tau_n b_{n-1}}\Pi_-,  
$$

\noindent our system reads:

\begin{equation}
U_{\xi\xi} = \Pi_0U_{\xi} - 2\tilde\Pi_+v_{\xi},   \label{eq:100} 
\end{equation}

\begin{equation}
W_{\xi\xi} = -\Pi_0W_{\xi} + 2\tilde\Pi_-v_{\xi},   \label{eq:101} 
\end{equation}

\begin{equation}
v_{\xi\xi} = -\tilde\Pi_-U_{\xi} + \tilde\Pi_+W_{\xi}.   \label{eq:102} 
\end{equation}

\noindent We can add to this system equation (\ref{eq:90}), which for the single endpoint case can be written as

$$
T_{\xi} = -\Pi_0v_{\xi} + \Pi_-u_{\xi} - \Pi_+w_{\xi} =  -\Pi_0v_{\xi} - \tilde\Pi_-U_{\xi} - \tilde\Pi_+W_{\xi}.  \eqno(\ref{eq:90}a) 
$$

\noindent Here so far there are only $\xi$-derivatives and no $t$-derivatives, however, the elements of $\Pi$, although known in principle from Palmer's considerations (see (\ref{eq:81}) and (\ref{eq:82})), still do not have nice explicit expressions in general. Therefore, in general we obtain only some {\it universal constraints} for the non-universal $\Pi$-quantities. The system (\ref{eq:100})--(\ref{eq:102}), if treated as a linear algebraic system w.r.t. $\Pi_0$, $\tilde\Pi_+$ and $\tilde\Pi_-$, is degenerate, and equation (\ref{eq:55}), $(v_{\xi})^2 = -U_{\xi}W_{\xi}$, is also a consequence of (\ref{eq:100}), (\ref{eq:101}) like it was for (\ref{eq:53}), (\ref{eq:54}). The other independent combination of (\ref{eq:100}) and (\ref{eq:101}) together with (\ref{eq:90}a) bring back our universal equation (\ref{eq:51}):

$$
W_{\xi}U_{\xi\xi} - U_{\xi}W_{\xi\xi} = 2v_{\xi}(-\Pi_0v_{\xi} - \tilde\Pi_-U_{\xi} - \tilde\Pi_+W_{\xi}) = 2v_{\xi}T_{\xi}.   
$$

\noindent Thus one can get only two rather than three universal constraints for $\Pi$-quantities, i.e. their expressions in terms of the derivatives of $T \equiv \ln\tau_n^J$. They are in fact equation (\ref{eq:90}a) and also (\ref{eq:102}) itself. 


\section*{\normalsize\bf VII. UNIVERSAL PDE (\ref{eq:T}) AND EXAMPLES} 

{\it Example 1: Gaussian emsemble.} \\
The Virasoro constraints~\cite{ASvM},~\cite{AvM7} give in this case (consider one interval situation -- the distribution of largest eigenvalue):

$$
\prt_t T = -\frac{1}{2}\prt_{\xi} T, \ \ \ \prt_{tt}^2 T = \frac{1}{4}\prt_{\xi\xi}^2 T + \frac{n}{2}.
$$

\noindent We substitute these expressions for the derivatives of $\ln\tau_n^J$ w.r.t. the first Toda time into our universal PDE (\ref{eq:T}) and, denoting $T' = \prt_{\xi}T$, get a fourth-order ODE:

\begin{equation}
\left(T''T'''' - (T''')^2 + 2(T'')^3\right)^2 = 4(T')^2\left((T''')^2 + 4(T'')^2(T'' + 2n)\right). \label{eq:105} 
\end{equation}

\noindent This is actually a third order ODE for $r \equiv T'$:

\begin{equation}
\left(r'r''' - (r'')^2 + 2(r')^3\right)^2 = 4r^2\left((r'')^2 + 4(r')^2(r' + 2n)\right). \label{eq:106} 
\end{equation}

\noindent This third-order ODE is a nonlinear equation with coefficients not involving the independent variable $\xi$, unlike the third-order equation for Gaussian ensemble~\cite{TW1}:

\begin{equation}
r''' + 6(r')^2 + 8nr' = 4\xi(\xi r' - r), \label{eq:107} 
\end{equation}

\noindent which can be integrated once to give a form of Painlev\'e IV equation~\cite{TW1}:

\begin{equation}
(r'')^2 + 4(r')^2(r'+2n) = 4(\xi r'-r)^2. \label{eq:108} 
\end{equation}

\noindent One can verify that our equation (\ref{eq:106}) is in fact a consequence of (\ref{eq:108}). The Painlev\'e equation (\ref{eq:108}) itself can also be obtained from (\ref{eq:106}). To this end define a function $g(\xi)$ such that the right-hand side of (\ref{eq:106}) equals $4r^2g^2$ and

\begin{equation}
r'r''' - (r'')^2 + 2(r')^3 = 2rg. \label{eq:109} 
\end{equation}

\noindent Differentiating the expression on the right-hand side of (\ref{eq:106}), defining $g^2(\xi)$ we have also

\begin{equation}
(g^2)' = 2gg' = 2r''r''' + 8r'r''(r'+2n) + 4(r')^2r'' = 2r''(r''' + 6(r')^2 + 8nr').  \label{eq:110} 
\end{equation}

\noindent Eliminating the third derivative $r'''$ from (\ref{eq:109}) and (\ref{eq:110}) gives

$$
r'gg' = r''g^2 + 2rr''g
$$

\noindent or

\begin{equation}
r'g' - r''g = 2rr''.   \label{eq:111} 
\end{equation}

\noindent Substituting $g = r'\Phi$ into (\ref{eq:111}) and solving the ODE for $\Phi$, then using the boundary condition $r \to 0$ as $\xi \to \infty$, one gets

$$
g = 2(\xi r' - r),
$$

\noindent and, finally, using again the definition of $g^2$ from the right-hand side of (\ref{eq:106}), one arrives at (\ref{eq:108}).

\medskip

\noindent {\it Example 2: Laguerre ensemble.} \\
Here the potential is $V(x) = x - \alpha\ln x$, and the relevant Virasoro constraints~\cite{ASvM},~\cite{AvM7} give expressions of Toda-time derivatives in terms of the endpoint ones:

$$
\prt_t T = -\xi T' + n(n+\alpha), \ \ \ \prt_{tt} T = \xi^2T'' + n(n+\alpha),
$$

\noindent so, for the function $r = T'$, one finds

$$
r_t = T_{\xi t} = -\xi r' - r, \ \ \ r_{tt} = T_{\xi tt} = \xi^2r'' + 2\xi r',
$$

$$
r_{\xi t} = T_{\xi\xi t} = -\xi r'' - 2r', \ \ \ r_{\xi tt} = T_{\xi\xi tt} = \xi^2r''' + 4\xi r'' + 2r'.
$$

\noindent For the Laguerre case it is convenient and standard to introduce the function $\sigma = \xi r$ instead of $r$, then one can see that

$$
r_t = -\sigma', \ \ \ r_{tt} = \xi\sigma'', \ \ \ r_{\xi t} = -\sigma'', \ \ \ r_{\xi tt} = \xi\sigma''' + \sigma'',
$$

\noindent and the universal PDE (\ref{eq:T}) reduces to the following ODE for $\sigma$:

\begin{equation}
\xi^2\left(-\sigma'(\xi\sigma''' + \sigma'') + \xi(\sigma'')^2 - 2(\sigma')^3\right)^2 = \sigma^2\left(4(\xi\sigma'-\sigma+n(n+\alpha))(\sigma')^2 + \xi^2(\sigma'')^2\right). \label{eq:112} 
\end{equation}

\noindent Again let us introduce function $g(\xi)$ such that r.h.s. of (\ref{eq:112}) $= \sigma^2g^2$ and

\begin{equation}
\xi\left(\sigma'(\xi\sigma''' + \sigma'') - \xi(\sigma'')^2 + 2(\sigma')^3\right) = \sigma g. \label{eq:113} 
\end{equation}

\noindent Differentiating the expression for $g^2$ on the r.h.s. of (\ref{eq:112}) gives

\begin{equation}
gg' = \sigma''\left(\xi^2\sigma''' + \xi\sigma'' + 6\xi(\sigma')^2 - 4\sigma\sigma' + 4n(n+\alpha)\sigma'\right).  \label{eq:114} 
\end{equation}

\noindent Again combining (\ref{eq:113}) and (\ref{eq:114}) to eliminate the third derivative $\sigma'''$ leads to

$$
\sigma'gg' = \sigma''g(g+\sigma),
$$

\noindent  or

\begin{equation}
\sigma'g' - \sigma''g = gg'', \label{eq:115} 
\end{equation}

\noindent which is the same as (\ref{eq:111}) up to a factor of 2 on the right-hand side. So one gets

\begin{equation}
g = \xi\sigma' - \sigma.  \label{eq:116} 
\end{equation}

\noindent Finally from the r.h.s. of (\ref{eq:112}) and (\ref{eq:116}) we arrive at

\begin{equation}
\xi^2(\sigma'')^2 + 4(\xi\sigma'-\sigma+n(n+\alpha))(\sigma')^2  - (\xi\sigma' - \sigma)^2 = 0,  \label{eq:117} 
\end{equation}

\noindent which is the standard form of Painlev\'e V equation for Laguerre ensemble~\cite{TW1, ASvM}.

\bigskip

The procedure just done for two classical ensembles can be applied for equation (\ref{eq:T}) in general. If we put $r = T_{\xi}$ then the universal PDE (\ref{eq:T}) reads

\begin{equation}
\left(r_t r_{\xi tt} - r_{tt}r_{\xi t} + 2(r_t)^3\right)^2 = r^2\left(4T_{tt}(r_t)^2 + (r_{tt})^2\right).  \label{eq:118} 
\end{equation}

\noindent We introduce function $\Phi$ such that (actually $r_t\Phi = G$, with $G$ defined by (\ref{eq:58}) and satisfying (\ref{eq:61}), (\ref{eq:69}))

\begin{equation}
\Phi^2 = \left(\frac{r_{tt}}{r_t}\right)^2 + 4T_{tt}  \label{eq:119} 
\end{equation}

\noindent and

\begin{equation}
r_t r_{\xi tt} - r_{tt}r_{\xi t} + 2(r_t)^3 = rr_t\Phi.  \label{eq:120} 
\end{equation}

\noindent Now differentiate (\ref{eq:119}) with respect to $\xi$ and eliminate the most senior derivative $r_{\xi tt}$ with the help of (\ref{eq:120}). Then one obtains

\begin{equation}
\Phi_{\xi} = \frac{rr_{tt}}{(r_t)^2}.  \label{eq:121} 
\end{equation}

\noindent Two equations -- (\ref{eq:119}) and (\ref{eq:121}) -- together give an equivalent representation of the single equation (\ref{eq:118}). If (\ref{eq:121}) could be integrated in general to get an expression for $\Phi$, then substituting it into (\ref{eq:119}) would give a universal analog of Painlev\'e equations for all unitary ensembles. Unfortunately, this is not the case. And, most important, there seems to be no not only general method to express the two $t$-derivatives -- first and second -- entering the universal PDE in terms of $\xi$-derivatives and functions of $\xi$ only but even case by case application of Virasoro constraints beyond classical ensembles leads to infinitely many equations connecting derivatives w.r.t. all different times of Toda hierarchy, together with $\xi$-derivatives. This seems to forbid simply eliminating the $t$-derivatives from the universal PDE for non-classical ensembles. To make this equation useful, a different point of view is needed. It might require some considerations of differential systems of the previous section, see also e.g. their treatment in Ref.~\cite{BeEyHa-06}, or a completely different perspective yet to be found. This open problem seems worth futher efforts since the approach presented here seems rather appealing.

\section*{\normalsize\bf VIII. CONCLUSIONS}

The simple universal connections between eigenvalue spacing probabilities and their ratios represented as ratios of one-dimensional Toda $\tau$-functions on the one side, and auxiliary variables naturally arising in the analytic approach to PDE for the probabilities using Fredholm determinants and resolvent operators on the other side, are derived. They hold for all Hermitian random matrix ensembles with unitary symmetry of the probability measure. They are obtained as a consequence of also derived here universal three-term recurrence relations for systems of functions orthogonal on subsets of real line. These orthogonal bases for restricted ensembles were studied by Borodin and Soshnikov in Ref.~\cite{BorSosh}, so we make a missing connection of their work with Ref.~\cite{TW1}. All our considerations can be generalized to unitarily invariant random matrix ensembles with eigenvalues on a circle or other simple contours in complex plane.  
\par The above relations of functional-analytic approach with the structure of orthogonal functions and one-dimensional Toda (or Toda-AKNS) integrable hierarchy allow us to derive some universal PDE for 1-Toda $\tau$-functions-matrix integrals and their ratios. Even a single universal PDE for the logarithm of gap probability (\ref{eq:T}) of various unitary ensembles is obtained. Although the difficulty of applying the universal PDE (\ref{eq:T}) directly to obtain specific ODE for various non-classical ensembles is rather disappointing, the author believes that equation (\ref{eq:T}), and also (\ref{eq:T+}) for the ratio of consecutive size matrix integrals,  may be important and could be somehow used to describe both universal and model-specific properties of random matrix UE. How to use them remains an interesting question for future research.

\bigskip
\bigskip
{\noindent\bf ACKNOWLEDGMENTS} \\
Author is especially grateful to C.A.Tracy for constant support and encouragement and valuable comments on the text of the paper. Special thanks are also due to A.Borodin for his general interest in this work and turning author's attention to paper of Palmer~\cite{Pal}. Author also wishes to thank A.Its and A.Soshnikov for useful discussions; E.Kanzieper for organizing the ISF Research Workshop on Random Matrices and Integrability, Yad Hashmona, Israel, March 2009, where a preliminary version of this work was presented. Author thanks the referee whose comments helped improve the previous version of the manuscript.
\par This work was supported in part by National Science Foundation under grant DMS-0906387 and VIGRE grant DMS-0636297.

\bigskip

\end{document}